\newcommand{\E}{\mathop{\mathbb E}}
\newcommand{\n}{\normalsize}
\newcommand{\s}{\normalsize}
\newcommand{\ignore}[1]{}
\begin{document}
\title{The Strategy of Experts for Repeated Predictions}

\author{Amir Ban \inst{1} \and Yossi Azar \inst{2}  \and  Yishay Mansour \inst{3}}

\institute{Blavatnik School of Computer Science, Tel Aviv University \email{amirban@me.com} \and Blavatnik School of Computer Science, Tel Aviv University \email{azar@tau.ac.il} \and Blavatnik School of Computer Science, Tel Aviv University \email{mansour.yishay@gmail.com}}
\maketitle

\date{}

\begin{abstract}
We investigate the behavior of experts who seek to make predictions with maximum impact on an audience. At a known future time, a certain continuous random variable will be realized. A public prediction gradually converges to the outcome, and an expert has access to a more accurate prediction. We study when the expert should reveal his information, when his reward is based on a proper scoring rule (e.g., is proportional to the change in log-likelihood of the outcome).

In \cite{azar2016should}, we analyzed the case where the expert may make a single prediction. In this paper, we analyze the case where the expert is allowed to revise previous predictions. This leads to a rather different set of dilemmas for the strategic expert. We find that it is optimal for the expert to always tell the truth, and to make a new prediction whenever he has a new signal. We characterize the expert's expectation for his total reward, and show asymptotic limits.

\end{abstract}


\section{Introduction}
Situations where a public is interested in the value of a future continuous variable, and has a time-varying consensus estimate of it, are common. Examples abound: Futures and options markets, the weather or climate, results of sport competitions, election results, new book / movie / album sales (for example, the Hollywood Stock Exchange), or economic indicators (for example, Moody's).
We analyze the problem of an expert who makes multiple public predictions in such situations, and in particular, the questions of when to make a first prediction, when to revise a previous prediction, and whether to reveal true beliefs when making a prediction.

Consider, for example, a futures market. A futures market is an exchange where people make contracts to buy specific quantities of a commodity or financial instrument at a specified price with delivery set at a specified time in the future. Traders make money by buying for less than the market's spot price on the delivery date, which we
shall henceforth call the {\em outcome}, or by selling
for more. In effect, a futures market is a prediction market for the
outcome.

The expert is not a trader himself, but someone who is reputed to have access to a more
accurate signal than possessed by regular traders. Often, his
reputation and living is based on this. Stock market analysts,
investment gurus and various types of journalists fit this
description.

The expert contributes to the market by making a public prediction, and is {\em post factum}
rewarded for it. Such
a prediction is a significant market event: Clearly, a market should heed an expert
 whose prediction already encompasses all current common knowledge and adds to it.
We shall below argue that proper scoring rules, and in particular the {\em logarithmic scoring rule}, are the right incentive for the prediction scenario described. Whether the expert's reward takes the form of actual payment, or less tangibly in a boost to his reputation as an expert, is
immaterial to our discussion. We assume that the expert's level of expertise, which we measure by
{\em quality} and describe below, is known to the market. 

We investigate the expert's strategy in such a prediction market. The strategy consists of
choosing the timing and truthfulness of his predictions. Our treatment is Bayesian, assuming all agents draw all possible inferences from their information. In \cite{azar2016should}, we analyzed the case where the expert is allowed a single prediction. In this paper, we study the case of multiple predictions, where an expert is allowed to revise his previous prediction.

\subsection{The Market as a Random Walk}

The  current price in a futures market represents a current
consensus on the outcome (assume that interest rates, or
inflation rates, have been incorporated into the price). According
to the efficient-market hypothesis (EMH), the current price
represents all currently available information, and therefore it is
impossible to consistently outperform the market. Consistent with the
EMH is the random-walk hypothesis, according to which stock market
prices (and their derivatives) evolve according to a random walk and
thus cannot be predicted. By the random-walk hypothesis, the outcome
is the result of a random walk from the current market price.
Equivalently, and the point of view we take in this paper, the
current price is the result of a random walk, reversed in time, from
the outcome (see Figure \ref{random-walk}).

A  random walk adds periodical (say, daily) i.i.d. steps to the
market price. Assuming prices have been adjusted for known trends,
the steps have zero mean. By suitable scaling of the price, the step
variance can be normalized to $1$. Following a common assumption
that the random walk is Gaussian, or lognormal\footnote{Taking logs transforms a lognormal random walk into a Gaussian one.}, the steps have standard
normal distribution (i.e., $N(0,1)$).

\subsection{Expert Quality}

\begin{figure}[tbp]
\centering
\includegraphics[height=0.25\textheight]{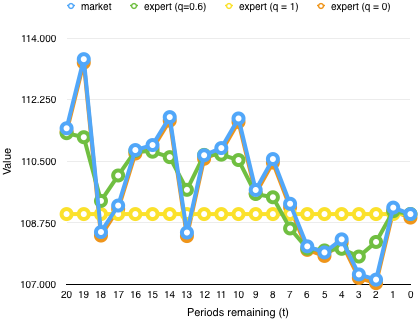}
\caption{Time-dependent signals of a market, a typical expert ($q = 0.6$), a know-all expert ($q=1$), and a know-nothing expert ($q=0$)}
\label{random-walk}
\end{figure}

The expert's expertise consists of having a more accurate signal of
the outcome price $x_0$ than the market's, and the expert's
quality measures by how much. The quality $q \in [0,1]$ measures
what part of the market's uncertainty the expert ``knows'', so that
it does not figure in the expert's own uncertainty. Equivalently,
the expert's uncertainty is $1-q$ of the market's uncertainty. This
proportion is statistical: It is the uncertainties' variances,
rather than their realizations, that are related by proportion. If
the market price is a Gaussian random walk from the outcome with
$N(0,1)$ steps, the expert's prediction is a Gaussian random walk
from the outcome with $N(0,1-q)$ steps.

The  expert's knowledge, i.e., the part of the market's uncertainty
that the expert is not uncertain about, has steps of zero mean and
$q$ variance. On the assumption that the expert's knowledge steps
and uncertainty steps are mutually independent, their sum has the
sum mean and sum variance of their parts, i.e., they sum back to the
market's uncertainty steps of zero mean and variance $q + (1-q) =
1$.

Figure \ref{random-walk} illustrates the evolution of a market's signal in the last 20 periods
until the outcome ($109$) becomes known. Also shown are the private signals of 3
experts predicting the same event, with qualities of $0.6, 1$ and $0$.

An  expert with $q=1$ has no uncertainty at all, and his signal
equals the outcome $x_0$ at all times $t$. At the other extreme, a
(so-called) expert with $q=0$ has no knowledge beyond common
knowledge, and his signal equals the market value $x_t$ at all $t$.

In  this paper an expert's quality is common knowledge, shared by
all traders as well as himself. Whether its value $q$ represents
 objective reality, or is a belief, based, e.g., on past
performance, makes no difference to our discussion.

\subsection{Scoring a Prediction}

\newcommand{\pnow}{\bar{p}}

A scoring rule is a way to evaluate and reward a prediction of a stochastic
event, when that prediction is presented as a distribution over possible results.
 The predictor declares at time $t>0$ a probability
distribution $p\in \Delta(R)$, and at time $0$ some $r\in R$ is
realized. A {\em scoring rule} $S$ rewards the predictor
$S(p,r)$ when his prediction was $p$ and the realized value
is $r$. In market settings, and many other settings, there exists a current prediction $\pnow$
and then the predictor is evaluated on the scoring difference effected $S(p,r)-S(\pnow,r)$. Note that the optimization problem of the predictor in a market situation is the same, since he has no
influence over $S(\pnow,r)$, the only difference is that now the
predictor might be penalized for making the current prediction less accurate.
A {\em proper} scoring rule is a scoring rule for which reporting the
true distribution is optimal according to the predictor's
information.

The logarithmic scoring rule,  with $S(p,r) = \log p_r$ (where $p_r$ is the value of PDF $p$ at $r$), scores a prediction by the
log-likelihood of the outcome according to the prediction. It is proper, and has strong roots in
information theory: In reference to a current prediction $\pnow$,
it scores $\log p_r / \pnow_r$, which, in information theory, is the {\em self-information}, also called {\em surprisal}, contained in the outcome. Conditional on $p$ being the correct distribution, the expected score is the Kullback-Leibler divergence between $p$ and $\pnow$:
$E_{r\sim p} [\log p_r/\pnow_r]= D_{KL}(p || \pnow)$.

In our model expert predictions are scored with the logarithmic scoring rule, which the expert seeks to maximize. This is justified by the following
\begin{itemize}
\item The reward is incentive compatible, eliciting truth-telling by the expert. This enables a Bayesian market to adopt predictions verbatim. A reward that is not incentive compatible would greatly complicate the Bayesian interpretation of predictions, possibly even making our problem indeterminate.
\item In our model (the essential details of which were already sketched), the entire prediction distribution follows from the prediction mean by common knowledge. Since only proper scoring rules are incentive compatible with predictions phrased as distributions over results, it follows that the reward {\em must} be by a proper scoring rule.
\item The logarithmic scoring rule is favored by its unique information-theory meaning, and other unique attributes (e.g., its locality). It is commonly used in real-world predictions markets, in a mechanism called LMSR ({\em Logarithmic Market Scoring Rule}) introduced by \cite{Hanson03}. \cite{Chen2010} say ``LMSR has
become the de facto market maker mechanism for prediction markets.
It is used by many companies including Inkling Markets, Consensus
Point, Yahoo!, and Microsoft''.
\end{itemize}

Proper scoring rules are {\em myopically} incentive compatible for risk-neutral agents, i.e. they are guaranteed to elicit the truth, but only when future actions are not taken into account (or, when there are no future actions, i.e., at the last prediction). As will be further discussed, when future actions {\em are} taken into account, incentive compatibility is {\em not} guaranteed.

\subsection{The Expert's Dilemma}

Assume that the expert has no obligation to speak at any particular
time, or at all. The reward for no prediction is zero, and for each prediction made,
the expert is rewarded by the logarithmic scoring rule.
The expert may revise his previous prediction by making a new one
whenever this is advantageous. The expert faces several dilemmas: When to make
the first prediction? and when is it appropriate to revise a previous prediction? Moreover,
as proper scoring rules are incentive
compatible only with the {\em last} prediction,
is there a strategy more profitable than always telling the truth?

In \cite{azar2016should}, we analyzed the single-prediction case,
and argued that an expert may pass on making a prediction in the hope of getting
a better opportunity later. In the multiple predictions scenario, there is no need
to pass, since the opportunity to make a future prediction remains. Conceivably, the
expert will want to revise his prediction whenever he gets a fresh signal (we find that
this is so), or, he may want to do so only when the new signal significantly changes his
prediction (we find otherwise).

Should he always tell the truth? Whenever the expert makes two or more predictions, he
may conceivably distort his first prediction, hoping to misdirect a gullibly-Bayesian market,
and reaping a net profit by subsequently setting the market right.

\subsection{Summary of Results}
\label{results}

Our results are satisfyingly tidy: Despite apparent temptation to mislead, it is optimal for the expert to always tell the truth, and therefore it is rational for the market to take his predictions at face value. The optimal prediction schedule for the expert is to make a new one whenever he has a new signal and is allowed to speak. We show that the expected total reward for all predictions is, asymptotically for large $t$, $\frac{1}{2} q \log t$, proportional to quality ($q$) and to the log of the number of periods left ($t$).

To some, these results, and especially the truthfulness result, would seem straightforward. However, this intuition is false, and not supported by the literature (see below in Section \ref{lit}, \cite{Chen2007} and \cite{chen2016informational}). The following generic example illustrates why.

\begin{example}
There is a market, who gets public signals, and an expert, who gets private signals.

Suppose that at time $t$ the market receives a signal $x_0 + \epsilon$, where $x_0$ is the outcome, and $\epsilon$ is a random variable. At $t - 1$, and (independently) at $t + 1$, the expert receives a signal $x_0 + \epsilon$ with probability $1/2$, and $x_0 - \epsilon$ with probability $1/2$. 

The expert makes a prediction at both times. Should he reveal his true information?

Whoever sees two different truthful signals is able to calculate the outcome $x_0 = (x_0 + \epsilon)/2 + (x_0 - \epsilon)/2$ exactly.

For any scoring rule, and any distribution of $\epsilon$, the expert should not tell the truth on his first prediction. This prevents the $50\%$ probability that the market will know $x_0$ at $t$,  preserving a $75\%$ probability that the expert can announce $x_0$ on his second prediction.
\end{example}

\subsection{Related Literature}
\label{lit}

Learning from expert opinion and its aggregation has a long history, with \cite{degroot1974reaching} and the Bayesian framework by Morris (\cite{morris1974decision}, \cite{morris1977combining}) leading to much subsequent work. While much of this work treats experts as oracles with no motivation of their own, some of it took a look at an expert's concern for his reputation, i.e., the wish to appear well-informed. In \cite{bayarri1989optimal} the setting was a weighted averaging of several expert opinions, with the weights adjusted by observed accuracy when the outcome is known. Experts wish to maximize their posterior weight. The authors found that incentive compatibility is attainable only by assigning a logarithmic utility to the weight. In \cite{ottaviani2006strategy} and \cite{ottaviani2006reputational} the authors cast the expert's inferred type as the optimization target. Their type, a real number, is a cognate of our quality. The authors argue that truth-telling is generally not possible, as experts are motivated to simulate better quality than they actually have. In our model, expert quality is common knowledge, and so not open to manipulation.

The question of timing has received attention from \cite{kreps1978temporal}, who laid out a basis for the temporal resolution of uncertainty in dynamic choices.  \cite{ottaviani2001information} discuss the optimal order of speaking to avoid herding in a committee where members have heterogenous expertise.

\cite{Chen2007} as well as \cite{chen2016informational} studied situations where several agents, each having private information, are given more than one opportunity to make a public prediction. The canonical case is ``Alice-Bob-Alice'' where Alice speaks before and after Bob's single speaking opportunity, both are awarded by a proper scoring rule for each prediction, and both maximize their total score. The proper scoring rule assures that each will tell the truth on their last prediction, and the open question is whether Alice, when going first, will tell the truth, lie, or keep her silence. \cite{Chen2007} show cases where Alice is motivated to mislead on her first prediction, and make the key observation that truthfulness is optimal if, in a different setup, namely, a single-prediction Alice-Bob game where Alice chooses whether to go first or second, she will always prefer going first. Building on that insight, \cite{chen2016informational} show that when the players' information is what they define as ``perfect informational substitutes'', they will predict truthfully and as early as allowed, while when they are ``perfect informational complements'', they will predict truthfully and as {\em late} as allowed, while when players are neither substitutes nor complements, untruthfulness can and will occur.

These works differ from ours in that they model agents having a constant piece of information, which they may choose when to reveal, while we model agents (expert and market) as receiving a time series of signals with new information every time period. (And the martingale property of random walks assures that the new information cannot be predicted from old). In the Discussion we comment on how our results reflect on a possible generalization of the mentioned works to dynamic-information settings.


The  Efficient Market Hypothesis was introduced by \cite{Fama1969}. The Random Walk Hypothesis is even older, originating in the 19th century, and discussed by, e.g., \cite{Samuelson1965} and \cite{Fama1965}. The Black-Scholes option pricing model \cite{Black} is based on a Gaussian random walk assumption.

Scoring rules have a very long history, going back to \cite{Finetti37}, \cite{Brier50} and \cite{Good52}. Proper scoring rules are often used for incentive-compatible belief elicitation of risk-neutral agents (e.g. \cite{armantier2013eliciting}).

\subsection{Paper Organization}

The  rest of this paper is organized as follows: In Section
\ref{model} we describe our model. After establishing some preliminary results in Section \ref{prelim},
Section \ref{multiple} is devoted to the multiple-prediction problem.
In Section \ref{discussion} we summarize and
offer concluding remarks.

\section{Model}
\label{model}

\subsection{Market Prediction}

A market predicts the outcome of a continuous random variable $X_0$,
whose realized value $x_0$ will be revealed at time $0$. Time is
discrete and flows backwards from an initial period $T_{max}$, i.e., $T_{max},
\ldots, t, \ldots, 1, 0$. At any time $t>0$ the market observes
$X_0+{\mathcal Z}_t$ where ${\mathcal Z}_t\sim N(0,t)$. We model ${\mathcal
Z}_t$ as a random walk with i.i.d. steps $Z_t, \ldots,
Z_1$, i.e., ${\mathcal Z}_t=\sum_{\tau=1}^t Z_\tau$ and $Z_\tau\sim
N(0,1)$. Let the market prediction (when uninformed by experts) be $X_t
:= X_0 + {\mathcal Z}_t$ at time $t$, and let $x_t$ be the realized
value. With every passing period $t$, the value of $Z_t = z_t$ is
revealed and becomes common knowledge, and the market's new
prediction changes to $x_{t-1} = x_t - z_t$. Note that the variance
of ${\mathcal Z}_t$ decreases with time, and at
time $0$ the market's prediction coincides with the outcome $x_0$.
The random variable $X_0$ is normally distributed $N(0,\sigma_0^2)$ where we assume
$\sigma_0^2 \gg T_{max}$. This assumption means that the outcome is, for practical purposes, unconstrained
by a prior, and makes posterior computations
dependent solely on observed signals, since\footnote{When a normal
variable with prior distribution $N(0,\sigma_0^2)$
is sampled with  known variance $t$ at value $x_t$, its Bayesian posterior
distribution is normal with mean $\frac{x_t/t}{1/\sigma_0^2+1/t}$
and variance $\frac{1}{1/\sigma_0^2+1/t}$. Assuming $\sigma_0^2 \gg
T_{max} \geq t$, this simplifies to $N(x_t,t)$.} we have $\E[X_0|X_t=x_t]=x_t$ and
$Var(X_0| X_t=x_t)=t$.

\subsection{Expert Information and Goal}

There is an expert, with quality $q \in [0,1]$, whose quality is
common knowledge. The expert's quality consists in ``knowing'' part
of the random steps $Z_t$ of every period, and therefore getting a
more accurate signal of $X_0$. Formally,
\begin{itemize}
\item For every $t$, $Z_t = A_t + B_t$, where $A_t \sim N(0,q)$ and $B_t\sim N(0,1-q)$ are mutually independent.
(Note that $Z_t\sim N(0,1)$.)
\item The expert's private signal at time $t$ is $Y_t = X_0 + B_1 + \ldots + B_t$ and let $y_t$ be its realized value. (Note that if $q=0$ then $Y_t=X_t$ and if $q=1$ then $Y_t=x_0$.)
\end{itemize}

At every $t > 0$, the expert may choose to make a prediction of the outcome. The market has a varying probability distribution on the outcome,
which is affected by its signals and by the expert's predictions. (The market price is
the distribution mean). Each prediction is scored by the logarithmic scoring rule as described
below. The expert's reward is the total score for all predictions
made.

The expert's outcome distribution at $t$ is $N(y_t, (1-q) t)$.\footnote{Since the expert is better informed than the market, his prediction depends on his signal alone. This is formally proved in Proposition \ref{at-prediction}.}
In practice, it is enough for the expert to announce $y_t$ as his entire distribution follows
by the model and common knowledge. A prediction's reward is determined
at time $0$ based on the realized value ($x_0$) by the logarithmic scoring rule.
Namely, if the market distribution prior to the expert prediction is $X_{t-}$ with density $f_{-}$, and following the expert prediction
the posterior market distribution is $X_{t+}$ with density
$f_{+}$, then the expert reward is $\log (f_{+}(x_0)/f_{-}(x_0))$, where $x_0$ is the realized value.

An expert who refrains
from making any prediction is awarded $0$. The expert
optimization problem is to maximize his expected reward given his
private information. {\em The question before the expert is if and
when to make predictions, and, when allowed multiple predictions, whether to be truthful in his predictions.}

\section{Preliminaries}
\label{prelim}

\subsection{Time and Expectation Notation}

Distributions and other variables often use a time subscript, e.g., $X_t$ is market's distribution at $t$ ($t$ periods before delivery date). When a prediction takes place at $t$, the notation $X_{t-}, X_{t+}$ is used to distinguish between the variable before, and after, respectively, the prediction.

We use the notation $\E\limits_t[Z]$ to denote the expectation of a random variable $Z$ according to the distribution known at $t$. This is shorthand for $\E\limits_{X_t}[Z]$ when referring to the market's expectation, or for $\E\limits_{Y_t}[Z]$ when referring to the expert's expectation. Which of the two is meant will either be clear from the context or explicitly stated. If a prediction was made at $t$, we use the notation $\E\limits_{t-}[Z], \E\limits_{t+}[Z]$ to distinguish between the market's expectation of $Z$ before and after, respectively, the prediction has been made.

\subsection{A Criterion for Independence}

Here we prove a result about random variables based on model signals that will enable us to determine whether they are stochastically independent.

From the model definitions we have, for every $i \geq j$
\s
\begin{align}
\label{covxx}
Cov(X_i, X_j) &= j \\
\label{covxy}
Cov(Y_i, X_j) &= Cov(X_i, Y_j) = Cov(Y_i, Y_j) = (1- q)j
\end{align}
\n

\begin{lemma}
\label{joint}
Define $\bm{U}$ to be the random vector \\$(X_0, X_1, Y_1, \ldots, X_t, Y_t)^T$, and let $\bm{U_1}, \bm{U_2}$ be two random vectors of linear combinations of the $\bm{U}$. Then the joint distributions of $\bm{U_1}, \bm{U_2}$ are mutually independent if and only if for every pair $u_1 \in \bm{U_1}$ and $u_2 \in \bm{U_2}$ $Cov(u_1, u_2) = 0$.
\end{lemma}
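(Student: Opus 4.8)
The plan is to reduce the statement to the classical fact that \emph{jointly Gaussian} random variables are independent if and only if they are uncorrelated. The forward implication is immediate and requires no Gaussianity: if $\bm{U_1}$ and $\bm{U_2}$ are independent then every $u_1 \in \bm{U_1}$ and $u_2 \in \bm{U_2}$ satisfy $Cov(u_1, u_2) = 0$. The entire content lies in the converse, and the crucial prerequisite that makes the converse valid is that the whole system is jointly Gaussian---recall that marginal Gaussianity alone is \emph{not} enough for uncorrelatedness to imply independence.

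First I would establish joint Gaussianity. By the model, $X_0 \sim N(0,\sigma_0^2)$ and, for each $\tau$, $Z_\tau = A_\tau + B_\tau$ with $A_\tau, B_\tau$ mutually independent Gaussians; moreover $X_i = X_0 + \sum_{\tau=1}^i (A_\tau + B_\tau)$ and $Y_i = X_0 + \sum_{\tau=1}^i B_\tau$. Hence every entry of $\bm{U} = (X_0, X_1, Y_1, \ldots, X_t, Y_t)^T$ is a fixed linear combination of the mutually independent Gaussian variables $X_0, A_1, B_1, \ldots, A_t, B_t$, so $\bm{U}$ is a Gaussian vector. Because $\bm{U_1}$ and $\bm{U_2}$ are by hypothesis linear combinations of $\bm{U}$, the stacked vector $\bm{W} := (\bm{U_1}^T, \bm{U_2}^T)^T$ is again a linear image of those independent Gaussians and is therefore jointly Gaussian.

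Next I would exploit the covariance structure. Writing the covariance matrix of $\bm{W}$ in block form according to the split into $\bm{U_1}$ and $\bm{U_2}$, the hypothesis $Cov(u_1, u_2) = 0$ for all pairs says precisely that the off-diagonal blocks vanish, so $\Sigma = \mathrm{diag}(\Sigma_1, \Sigma_2)$ is block-diagonal. I would then conclude independence through the characteristic function: for a Gaussian vector with mean $\mu = (\mu_1, \mu_2)$ and covariance $\Sigma$,
\[
\phi_{\bm{W}}(s_1, s_2) = \exp\!\left(i\,\mu^T s - \tfrac{1}{2}\, s^T \Sigma s\right),
\]
and block-diagonality of $\Sigma$ makes the quadratic form split, so $\phi_{\bm{W}}(s_1, s_2) = \phi_{\bm{U_1}}(s_1)\,\phi_{\bm{U_2}}(s_2)$. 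Factorization of the joint characteristic function is equivalent to independence of $\bm{U_1}$ and $\bm{U_2}$.

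I expect the only real subtlety---the main obstacle---to be degeneracy: the chosen linear combinations may be linearly dependent, making $\Sigma_1$ or $\Sigma_2$ singular and ruling out a clean density-factorization argument. This is exactly why I would argue through the characteristic function rather than densities, since the characteristic-function identity and its factorization hold verbatim for singular covariance matrices. With that choice the proof is robust, and the covariance formulas (\ref{covxx})--(\ref{covxy}) are not even needed for the lemma itself; they only become relevant when the lemma is later applied to specific combinations.
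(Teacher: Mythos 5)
Your proof is correct and takes essentially the same route as the paper: establish that $\bm{U}$, and hence the stacked vector $(\bm{U_1}^T, \bm{U_2}^T)^T$, is jointly Gaussian because each entry is a linear combination of the mutually independent Gaussians $X_0, A_1, B_1, \ldots, A_t, B_t$, and then invoke the fact that jointly Gaussian vectors are independent if and only if all cross-covariances vanish. The only difference is that the paper cites this last fact from a textbook (Tong, Theorem 3.3.2), whereas you prove it yourself via the characteristic-function factorization --- a self-contained addition that also correctly sidesteps the degeneracy issue a density-based argument would face.
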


\begin{proof}
Every linear combination of $\bm{U}$ is normal, as it is a linear combination of the i.i.d. normal variables $A_i, B_i, i = 1, \ldots t$, and of $X_0$, which is normal and independent of each of the others. Therefore $\bm{U}$ has a jointly multivariate normal distribution. Therefore so has random vector $\left( \begin{smallmatrix} \bm{U_1} \\ \bm{U_2} \end{smallmatrix} \right)$. The lemma states a general property of jointly multivariate normal distributions, see \cite{Tong} Theorem 3.3.2.
\qed
\end{proof}

\subsection{Market Effect of a Prediction}
\label{effect-at}

Before evaluating, and then optimizing, expert's value for a prediction strategy, we must resolve how a single prediction affects market price, at the time of prediction. For the multiple-prediction problem, we also need to determine the effect {\em after} the prediction was made. This will be resolved in Section \ref{effect-after}.

Define $\mathcal{Z}_t$ to be the set of all expert and market signals previous to $t$ ($X_\tau, Y_\tau$ for $\tau > t$).

Assume the expert makes a prediction $y_t$ at time $t$, implying a distribution of $Y_t \sim N(y_t, (1-q)t)$. Then at time $t$ the market's posterior distribution will be the expert's announced distribution, regardless of the market's signal at this time $x_t$ and all previous signals $\mathcal{Z}_t$.

\begin{proposition}
\label{at-prediction}
If the expert makes a prediction $y_t$ at time $t$, the market's posterior distribution is the expert's implied distribution $Y_t \sim N(y_t, (1-q)t)$.
\end{proposition}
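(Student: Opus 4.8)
The plan is to compute the market's posterior on $X_0$ by conditioning on the \emph{entire} information available once the expert has spoken, namely $(Y_t = y_t,\, X_t = x_t,\, \mathcal{Z}_t)$, and to show that this posterior collapses to $N(y_t,(1-q)t)$ with no dependence on $x_t$ or on $\mathcal{Z}_t$. Throughout I will use the flat-prior simplification justified by $\sigma_0^2 \gg T_{max}$ (the footnote following the model): observing a single Gaussian signal of $X_0$ with known variance $v$ and value $s$ yields posterior $N(s,v)$. Because every quantity in play is a linear combination of $X_0$ and the i.i.d.\ normals $\{A_\tau, B_\tau\}$, all of them are jointly Gaussian, so Lemma \ref{joint} lets me replace each independence claim by a covariance computation. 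I will write $\mathcal{A}_t := \sum_{s=1}^t A_s \sim N(0,qt)$ and $\mathcal{B}_t := \sum_{s=1}^t B_s \sim N(0,(1-q)t)$, so that $Y_t = X_0 + \mathcal{B}_t$ and $X_t = X_0 + \mathcal{A}_t + \mathcal{B}_t$.

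I would peel off the irrelevant conditioning in two stages. First I discard the past signals $\mathcal{Z}_t = \{X_\tau, Y_\tau : \tau > t\}$. Given $(X_t, Y_t)$, knowing these signals is equivalent to knowing the increments $X_\tau - X_t = \sum_{s=t+1}^\tau (A_s + B_s)$ and $Y_\tau - Y_t = \sum_{s=t+1}^\tau B_s$. Each of these involves only indices $s > t$, whereas $X_0$, $X_t$, and $Y_t$ involve only $X_0$ and indices $s \le t$; since distinct $A$'s and $B$'s (and $X_0$) are mutually independent, every such increment has zero covariance with each of $X_0, X_t, Y_t$. By Lemma \ref{joint} the increments are jointly independent of $(X_0, X_t, Y_t)$, so conditioning on them on top of $(X_t, Y_t)$ leaves the law of $X_0$ unchanged. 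The posterior therefore depends only on $(X_t = x_t,\, Y_t = y_t)$.

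Second --- and this is the crux --- I show the market's own signal $x_t$ is also inert. I reparametrize the pair $(X_t, Y_t)$ as $(Y_t,\, X_t - Y_t)$, noting $X_t - Y_t = \mathcal{A}_t$. A direct covariance check gives $Cov(X_t - Y_t, X_0) = 0$ and $Cov(X_t - Y_t, Y_t) = 0$, so by Lemma \ref{joint} the discrepancy $X_t - Y_t$ is independent of the pair $(X_0, Y_t)$, and hence independent of $X_0$ conditionally on $Y_t$. Thus the extra observation $X_t - Y_t = x_t - y_t$ carries no information about $X_0$ once $Y_t = y_t$ is known, and the posterior reduces to the one given $Y_t = y_t$ alone. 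Since $Y_t$ observes $X_0$ with known variance $(1-q)t$, the flat-prior rule yields posterior $N(y_t,(1-q)t)$, as claimed.

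The conceptual obstacle lies entirely in the second stage: the proposition asserts that a Bayesian market \emph{discards its own signal} $x_t$ upon hearing the expert, which is counterintuitive. The resolution is that the market--expert discrepancy $X_t - Y_t$ is exactly the accumulated ``knowledge'' noise $\mathcal{A}_t$ that the expert has filtered out; being built from the $A_s$ alone, it is statistically orthogonal both to the outcome $X_0$ and to the expert's report $Y_t$, and hence worthless to a market already holding the sharper signal. A reader preferring an explicit calculation can reach the same conclusion by writing the posterior density up to proportionality via Bayes' rule under the unit-Jacobian change of variables $(Y_t, X_t) \mapsto (Y_t - \xi,\, X_t - Y_t) = (\mathcal{B}_t, \mathcal{A}_t)$ at fixed $X_0 = \xi$: the factor in $\mathcal{A}_t = x_t - y_t$ separates out as a constant in $\xi$, leaving a density proportional to $\exp(-(\xi - y_t)^2 / (2(1-q)t))$.
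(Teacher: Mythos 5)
Your proof is correct and takes essentially the same approach as the paper's: the paper likewise applies Lemma \ref{joint} after a linear change of variables, subtracting $Y_t$ from every element of $\mathcal{Z}_t \cup \{X_t\}$ in one step and checking that each residual has zero covariance with both $X_0$ and $Y_t$, so that conditioning collapses to $\Pr[x_0 \mid Y_t]$. Your two-stage organization (past increments first, then $X_t - Y_t = \mathcal{A}_t$) is only a cosmetic rearrangement of that same argument.
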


\begin{proof}
Let $\mathcal{Z} = \mathcal{Z}_t \cup \{X_t\}$. Define $\mathcal{Z}' = \mathcal{Z} - Y_t$ to be the set of random variables $Z - Y_t$ with $Z \in \mathcal{Z}$.

As given $Y_t$ there is a one-to-one correspondence between $\mathcal{Z}$ and $\mathcal{Z}'$
\s\begin{align*}
\Pr[x_0 | Y_t, \mathcal{Z}] = \Pr[x_0 | Y_t, \mathcal{Z}']
\end{align*}\n

By \eqref{covxx}, \eqref{covxy} for each $Z \in \mathcal{Z}'$, $Cov(Z, Y_t) = 0$ and $Cov(Z, X_0) = 0$. So by Lemma \ref{joint}
\s\begin{align*}
\Pr[x_0 | Y_t, \mathcal{Z}'] = \frac{\Pr[x_0, Y_t, \mathcal{Z}']}{\Pr[Y_t, \mathcal{Z}']} = \frac{\Pr[x_0, Y_t]\Pr[\mathcal{Z}']}{\Pr[Y_t]\Pr[\mathcal{Z}']} =  \Pr[x_0 | Y_t]
\end{align*}\n

Therefore, as claimed
\s\begin{align*}
\Pr[x_0 | Y_t, \mathcal{Z}] = \Pr[x_0 | Y_t]
\end{align*}\n
\qed
\end{proof}

\subsection{Prediction Score Expectation}

Assume the expert makes a prediction at time $t$.
Let the market prediction prior to the expert prediction be $X_{t-}
\sim N(\mu_{-},\sigma_{-}^2)$ with density $f_{-}$ and let the posterior
market prediction be $X_{t+} \sim N(\mu_{+},\sigma_{+}^2)$ with density
$f_{+}$. Let expert's reward be denoted by $W$, then

\s\begin{align}
\label{W}
W &= \log \frac{f_{+}(x_0)}{f_{-}(x_0)} = \log
\frac{\frac{1}{\sigma_{+}\sqrt{2\pi}}e^{-\frac{(x_0 -
\mu_{+})^2}{2\sigma_{+}^2}}}{\frac{1}{\sigma_{-}\sqrt{2\pi}}e^{-\frac{(x_0
- \mu_{-})^2}{2\sigma_{-}^2}}} \nonumber \\
&= \log\frac{\sigma_{-}}{\sigma_{+}} + \frac{(x_0 - \mu_{-})^2}{2\sigma_{-}^2} - \frac{(x_0 - \mu_{+})^2}{2\sigma_{+}^2}
\end{align}\n
As the reward depends on $x_0$, its value is only known at time $0$. The expert can calculate his reward expectation when making it (at $t$), based on his belief about the distribution of $x_0$.

Consider the case that the expert prediction is truthful, so that by Proposition \ref{at-prediction} after it his beliefs of $x_0$ are identical to the market's.

\begin{proposition}
\label{expected}
If the market's prediction before an expert prediction is $X_{t-} \sim N(\mu_{-},\sigma_{-}^2)$, and after an expert prediction is $X_{t+} \sim N(\mu_{+},\sigma_{+}^2)$, then if the prediction is truthful the expert's expected reward is positive and equals the Kullback-Leibler divergence $D_{KL}(X_{t+} || X_{t-})$.

\s\begin{align}
\label{E_W}
\E_t[W] &= D_{KL}(X_{t+} || X_{t-}) =
\frac{(\mu_+ - \mu_-)^2}{2\sigma_-^2} + \frac{1}{2}\Bigl(\frac{\sigma_+^2}{\sigma_-^2} - 1 - \log \frac{\sigma_+^2}{\sigma_-^2}\Bigr)
\end{align}\n
\end{proposition}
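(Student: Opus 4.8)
The plan is to compute the expectation directly from expression \eqref{W}, exploiting the fact that the prediction is truthful. First I would invoke Proposition \ref{at-prediction}: because the expert tells the truth, after the prediction the market's posterior is exactly the expert's implied distribution, so the expert's own belief about the outcome at the moment of prediction is $x_0 \sim X_{t+} = N(\mu_+,\sigma_+^2)$. Thus $\E_t[\cdot]$ is an expectation over $x_0 \sim N(\mu_+,\sigma_+^2)$, and this identification is the one genuinely conceptual step; everything after it is mechanical.

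Next I would take the expectation of \eqref{W} term by term. The first term $\log(\sigma_-/\sigma_+)$ is constant and equals $-\tfrac{1}{2}\log(\sigma_+^2/\sigma_-^2)$. For the second term I would use the bias--variance decomposition: since $\E[x_0]=\mu_+$ and $\mathrm{Var}(x_0)=\sigma_+^2$, we get $\E[(x_0-\mu_-)^2]=\sigma_+^2+(\mu_+-\mu_-)^2$, so the term contributes $\frac{\sigma_+^2+(\mu_+-\mu_-)^2}{2\sigma_-^2}$. The third term uses $\E[(x_0-\mu_+)^2]=\sigma_+^2$, giving exactly $\tfrac12$. Collecting the three contributions reproduces the claimed closed form, which I would then note is precisely the standard Kullback--Leibler divergence $D_{KL}(X_{t+}\|X_{t-})$ between two univariate Gaussians.

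For the positivity claim I would argue elementarily, rather than merely cite Gibbs' inequality, since the formula is already in hand. The first summand $\frac{(\mu_+-\mu_-)^2}{2\sigma_-^2}$ is manifestly nonnegative. For the second, set $x=\sigma_+^2/\sigma_-^2>0$ and consider $g(x)=x-1-\log x$; then $g'(x)=1-1/x$ vanishes only at $x=1$, where $g$ attains its global minimum $g(1)=0$, so $g(x)\ge 0$ with equality iff $x=1$. Hence $\E_t[W]\ge 0$, with equality exactly when $\mu_+=\mu_-$ and $\sigma_+=\sigma_-$, i.e. when the prediction leaves the market distribution unchanged; since a genuine prediction alters the distribution, the reward is strictly positive.

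I expect no serious obstacle: the whole proposition is a short moment computation. The only place to be careful is the very first step --- making sure the expectation is taken under the post-prediction distribution $N(\mu_+,\sigma_+^2)$ rather than the pre-prediction one. That choice is legitimate only because of truthfulness (Proposition \ref{at-prediction}); for a non-truthful prediction the expert's belief would diverge from the market's posterior and the answer would change, which is exactly the tension the rest of the paper must confront.
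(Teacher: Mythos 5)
Your proof is correct and follows essentially the same route as the paper's: both take the expectation of \eqref{W} under the expert's post-prediction belief $x_0 \sim N(\mu_+,\sigma_+^2)$ (justified by truthfulness via Proposition \ref{at-prediction}) and evaluate the quadratic terms using second moments, which is exactly your bias--variance step. The only cosmetic difference is in the positivity argument, where you analyze $g(x)=x-1-\log x$ by calculus while the paper cites the equivalent inequality $\log(1-x)\leq -x$; both are fine.
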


\begin{proof}
As the second moment of the normal distribution $N(\mu,\sigma^2)$ is $\mu^2 + \sigma^2$, and since the expert's distribution translates to
\s\begin{align*}
x_0 - \mu_{-} &\sim N(\mu_{+} - \mu_{-},\sigma_{+}^2) \\
x_0 - \mu_{+} &\sim N(0,\sigma_{+}^2),
\end{align*}\n
\noindent we get by taking expectations from \eqref{W}
\s\begin{align*}
\E_t[W] = \E_{x_0\sim N(\mu_{+},\sigma_{+}^2)}[W] &= \log\frac{\sigma_{-}}{\sigma_{+}} + \frac{(\mu_{+} - \mu_{-})^2 + \sigma_{+}^2}{2\sigma_{-}^2}- \frac{0 + \sigma_{+}^2}{2\sigma_{+}^2} \\ 
&= \frac{(\mu_+ - \mu_-)^2}{2\sigma_-^2} + \frac{1}{2}\Bigl(\frac{\sigma_+^2}{\sigma_-^2} - 1 - \log \frac{\sigma_+^2}{\sigma_-^2}\Bigr)
\end{align*}\n

This is positive, because for every $x < 1$, $\log (1-x) \leq -x$
\qed
\end{proof}

We apply the above result to calculate the expected reward of a {\em first} prediction.

\begin{proposition}
\label{first}
For an expert's first prediction at $t$, his reward expectation is
\s\begin{align}
\label{KLdivergence}
\E_t[W] =  \frac{(x_t - y_t)^2}{2t} - \frac{1}{2}\Bigl(q + \log (1 - q)\Bigr)
\end{align}\n
\end{proposition}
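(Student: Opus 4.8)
The plan is to specialize the general divergence formula of Proposition \ref{expected} to the situation of a \emph{first} prediction. Since this is the expert's first prediction, the market has been informed only by its own signals up to time $t$, so by the model's posterior computation (using the assumption $\sigma_0^2 \gg T_{max}$, which gives $\E[X_0 | X_t = x_t] = x_t$ and $Var(X_0 | X_t = x_t) = t$) the pre-prediction market distribution is $X_{t-} \sim N(x_t, t)$; that is, $\mu_- = x_t$ and $\sigma_-^2 = t$. By Proposition \ref{at-prediction}, a truthful prediction replaces this by the expert's implied distribution $X_{t+} = Y_t \sim N(y_t, (1-q)t)$, so $\mu_+ = y_t$ and $\sigma_+^2 = (1-q)t$. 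These four identifications are the entire substance of the argument.

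With these in hand I would simply substitute into \eqref{E_W}. The first term becomes $(\mu_+ - \mu_-)^2 / (2\sigma_-^2) = (y_t - x_t)^2 / (2t)$, which is the stated $(x_t - y_t)^2 / (2t)$. For the second term the crucial simplification is that the variance ratio is $\sigma_+^2 / \sigma_-^2 = (1-q)t / t = 1 - q$, independent of both $t$ and the realized signals. Hence $\tfrac12 \bigl( \sigma_+^2/\sigma_-^2 - 1 - \log(\sigma_+^2/\sigma_-^2) \bigr) = \tfrac12 \bigl( (1-q) - 1 - \log(1-q) \bigr) = -\tfrac12 \bigl( q + \log(1-q) \bigr)$, which together with the first term yields exactly \eqref{KLdivergence}.

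There is essentially no hard step here: once the before/after parameters are correctly identified, the result is a direct substitution followed by an algebraic cancellation. The only point demanding care is the justification of the pre-prediction distribution $N(x_t, t)$ — this is precisely where the \emph{first}-prediction hypothesis is used (so that no earlier expert information is already folded into $X_{t-}$), combined with the large-prior assumption that makes the market's posterior from its own signals alone exactly $N(x_t, t)$. I would also remark that, because the prediction is assumed truthful, the positivity of $\E_t[W]$ comes for free from Proposition \ref{expected} and need not be re-established.
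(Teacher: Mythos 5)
Your proposal is correct and follows exactly the paper's own route: identify $\mu_- = x_t$, $\sigma_-^2 = t$ for the pre-prediction market distribution, use Proposition \ref{at-prediction} to get $\mu_+ = y_t$, $\sigma_+^2 = (1-q)t$, and substitute into \eqref{E_W}. Your write-up is in fact slightly more explicit than the paper's (carrying out the cancellation $\tfrac12((1-q)-1-\log(1-q)) = -\tfrac12(q+\log(1-q))$ and flagging where the first-prediction hypothesis is used), but the argument is the same.
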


\begin{proof}
For an expert's first prediction, we have $\mu_- = x_t$,
$\sigma_-^2 = t $. By Proposition \ref{at-prediction} $\mu_+ = y_t$, and $\sigma_+^2 = (1 - q) t$.
Hence, $X_t^{-} \sim N(x_t,t)$ and $X_t^{+}\sim N(y_t,(1-q)t)$. Substituting these in \eqref{E_W} we
derive \eqref{KLdivergence}.
\qed
\end{proof}
Note that the reward $W$ may be positive or negative
depending on $x_0$, but its expectation is always non-negative.

\section{The Optimal Multiple-Prediction Strategy}
\label{multiple}

\subsection{Market Effect after a Prediction}
\label{effect-after}

When an expert has made a prediction at $T$, what is the market's posterior distribution at the next periods $T-1, T-2, \ldots$, assuming the expert makes no new predictions? This is more complex than at the time of prediction (see Section \ref{effect-at}), and the distribution depends on more than one signal, as stated in the following proposition

\begin{proposition}
\label{after-prediction}
At time $t$, Let $T > t$ be the time of expert's latest prediction $y_T$. Let $\mathcal{Z} := \mathcal{Z}_T \cup \{y_T, x_T,  x_{T-1}, \ldots x_t\}$. Then at $t$ the market's posterior distribution is \\ $N(\mu(x_0 | \mathcal{Z}), Var (x_0 | \mathcal{Z}))$ with
\s\begin{align*}
\mu(x_0 |\mathcal{Z}) &= \frac{\frac{x_t}{t} + \frac{1}{1-q}\frac{y_T}{T} - \frac{x_T}{T}}{\frac{1}{t} + \frac{q}{1-q}\frac{1}{T}}\\
Var (x_0 | \mathcal{Z}) &= \frac{1}{\frac{1}{t} + \frac{q}{1-q}\frac{1}{T}}
\end{align*}\n
\end{proposition}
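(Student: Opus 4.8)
The plan is to exploit that every variable in sight is a linear combination of the independent Gaussians $X_0, A_1, B_1, \ldots, A_T, B_T$, so the entire collection is jointly normal and the posterior of $X_0$ given $\mathcal{Z}$ is again normal (cf.\ Lemma \ref{joint}); it then remains only to pin down its mean and variance. Because the prior on $X_0$ is effectively flat ($\sigma_0^2 \gg T_{max}$), the posterior is governed entirely by the observed signals, exactly as established in Section \ref{model} and in Proposition \ref{at-prediction}.

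First I would shrink the conditioning set, claiming that for the posterior of $X_0$ conditioning on all of $\mathcal{Z}$ is equivalent to conditioning on the three signals $X_t, X_T, Y_T$. The pre-$T$ signals in $\mathcal{Z}_T$ are discarded exactly as in Proposition \ref{at-prediction}: for $\tau > T$ the differences $X_\tau - X_T$ and $Y_\tau - Y_T$ involve only steps indexed above $T$, hence have zero covariance with each of $X_0, X_t, X_T, Y_T$, and so by Lemma \ref{joint} are independent of them. The intermediate market signals $x_{T-1}, \ldots, x_{t+1}$ are treated the same way: conditioning on them together with $X_t, X_T$ is the same as conditioning on $X_t, X_T$ and the individual steps $Z_i = X_i - X_{i-1}$ ($t < i \le T$), and each pairwise difference $Z_i - Z_j$ has zero covariance with each of $X_0, X_t, X_T, Y_T$ (the $A$- and $B$-contributions cancel), so by Lemma \ref{joint} the steps carry no information beyond their sum $X_T - X_t$. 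This leaves $\{X_t, X_T, Y_T\}$.

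Then I would compute the posterior from these three signals by repackaging them as two \emph{independent} noisy observations of $X_0$. The first is $X_t = X_0 + \sum_{i \le t}(A_i + B_i)$, a reading of precision $1/t$ centered at $x_t$. The second is the combination
\begin{align*}
V := \frac{Y_T - (1-q)X_T}{q} = X_0 + \sum_{i \le T} B_i - \tfrac{1-q}{q}\sum_{i \le T} A_i,
\end{align*}
whose noise has variance $T(1-q) + \tfrac{(1-q)^2}{q^2}\,Tq = T(1-q)/q$, i.e.\ precision $\tfrac{q}{(1-q)T}$, centered at $\tfrac{y_T - (1-q)x_T}{q}$. A direct covariance check shows the two noise terms are uncorrelated, hence (being jointly normal) independent: the cross term is $\sum_{i\le t}\mathrm{Var}(B_i) - \tfrac{1-q}{q}\sum_{i\le t}\mathrm{Var}(A_i) = t(1-q) - \tfrac{1-q}{q}\,tq = 0$. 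Combining two independent Gaussian observations of a common mean under a flat prior adds their precisions and precision-weights their means, which yields exactly the stated $\mathrm{Var}(x_0 \mid \mathcal{Z}) = \big(\tfrac1t + \tfrac{q}{1-q}\tfrac1T\big)^{-1}$ and, after simplification, the stated $\mu(x_0\mid\mathcal{Z})$. Equivalently one can apply the standard multivariate-normal conditioning formula to $(X_t, X_T, Y_T)$ with conditional covariance matrix whose entries are given by \eqref{covxx}--\eqref{covxy}, solving $\Sigma \mathbf{w} = \mathbf{1}$ to obtain the signal weights $\mathbf{w} = (1/t,\, -1/T,\, 1/((1-q)T))$, whence the precision is $\mathbf{1}^T\mathbf{w}$ and the mean numerator is $\mathbf{s}^T\mathbf{w}$ for $\mathbf{s}=(x_t,x_T,y_T)$.

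The step I expect to be the main obstacle is justifying the reduction, in particular showing that the ``expert-knowledge'' direction $X_T - Y_T = \sum_{i \le T} A_i$ really does not inform $X_0$ once $X_t$ and $V$ are known. It has zero covariance with $X_0$, but nonzero covariance with both $X_t$ and $V$, so its irrelevance is a statement of conditional independence rather than plain independence; the clean way to see it is that, rewriting the posterior-mean numerator in terms of $V$ and $X_T - Y_T$, the $X_T - Y_T$ contribution cancels identically. A secondary point requiring care is the improper flat prior, which I would handle by working with finite $\sigma_0^2$ and letting $\sigma_0^2 \to \infty$ (equivalently, using the information form, where the prior contributes a vanishing term to the precision).
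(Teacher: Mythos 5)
Your proposal reaches the right answer, but the route that actually carries it is your ``equivalently'' fallback, not your headline argument, and the contrast with the paper's proof is instructive. The paper (Lemmas \ref{Y_t}, \ref{three-vars} and \ref{redundant}) never conditions $X_0$ on the three signals directly: it targets the expert's latent signal $Y_t$ instead, packaging $x_t$, $y_T$ and $z := y_T - x_T + x_t$ as three observations of $Y_t$ whose noises $\sum_{i\le t} A_i$, $\sum_{i=t+1}^T B_i$ and $-\sum_{i=t+1}^T A_i$ are genuinely mutually independent (they are sums of disjoint collections of the independent steps), so the textbook precision-weighting formula applies with no further justification; Lemma \ref{Y_t} then lifts the result to $x_0$ by adding the independent noise $\sum_{i\le t} B_i$ of variance $(1-q)t$. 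That choice of target variable is precisely what dissolves the obstacle you flag: in your compression of $(x_t, x_T, y_T)$ into two observations $(x_t, V)$ of $X_0$, the discarded direction $W = X_T - Y_T$ has noise correlated with the noises of \emph{both} retained observations (covariances $qt$ and $-(1-q)T$ respectively), so its irrelevance is a conditional-independence statement, not plain independence, and your proposed fix --- checking that the $W$-contribution to the posterior-mean numerator cancels --- presupposes that you already know that numerator, i.e.\ it amounts to redoing the full three-signal computation.

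Fortunately, your secondary route is a complete and correct proof: solving $\Sigma \mathbf{w} = \mathbf{1}$ for the $3\times 3$ noise covariance of $(x_t, x_T, y_T)$ does give $\mathbf{w} = (1/t,\,-1/T,\,1/((1-q)T))$ (the weights check against \eqref{covxx}--\eqref{covxy}), yielding precision $\frac{1}{t} + \frac{q}{1-q}\frac{1}{T}$ and the stated mean; present that as the main argument rather than as an aside. Your reduction of $\mathcal{Z}$ to $\{x_t, x_T, y_T\}$ is sound and parallels the paper's Lemma \ref{redundant}, with your pairwise step differences $Z_i - Z_j$ playing the role of the paper's bridge residuals $R_\tau$: both reparametrizations have zero covariance with each of $X_0, X_t, X_T, Y_T$, so Lemma \ref{joint} applies equally, and your handling of the flat prior by a $\sigma_0^2 \to \infty$ limit matches the paper's convention.
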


\begin{proof}

In outline, the proof is based on showing:
\begin{enumerate}
\item The distribution of $Y_t | y_T, x_T, x_t$.
\item $\Pr[Y_t | \mathcal{Z}] = \Pr[Y_t | y_T, x_T, x_t]$, i.e., no observation other than $y_T, x_T, x_t$ affects $Y_t$'s posterior. 
\item How the distribution of $x_0 |\mathcal{Z}$ is derived from that of $Y_t |\mathcal{Z}$.
\end{enumerate}

Our proof will show the posterior distribution of $y_t$ conditional on the same random variables $\mathcal{Z}$, from which the posterior distribution of $x_0$ will follow by the following lemma.

\begin{lemma}
\label{Y_t}
\s\begin{align*}
\mu(x_0 | \mathcal{Z}) &= \mu(Y_t | \mathcal{Z}) \\
Var (x_0 | \mathcal{Z}) &= Var (Y_t | \mathcal{Z}) + (1-q) t
\end{align*}\n
\end{lemma}

\begin{proof}
For each $Z \in \mathcal{Z}$ the value of $Z - Y_t$ is independent of $Y_t - x_0$. Therefore
\s\begin{align*}
\mu(x_0 | \mathcal{Z}) &= \mu(Y_t | \mathcal{Z}) - \mu(Y_t - x_0) \\
Var (x_0 | \mathcal{Z}) &= Var (Y_t | \mathcal{Z}) + Var(Y_t - x_0)
\end{align*}\n

As $Y_t - x_0 = \sum\limits_{i=0}^t B_i \sim N(0,(1-q)t)$, the lemma follows.
\qed
\end{proof}

We will next show the distribution of $Y_t$ conditional only on $y_T, x_T, x_t$. This will be followed by a proof that these are the only variables in $\mathcal{Z}$ that matter.

\begin{lemma}
\label{three-vars}
Conditional only on $y_T, x_T$ and $x_t$, the distribution of $Y_t$ is normal, with
\s\begin{align}
\label{y_t_mean}
\mu(Y_t | y_T, x_T, x_t) &= \frac{\frac{x_t}{t} + \frac{1}{1-q}\frac{y_T}{T} - \frac{x_T}{T}}{\frac{1}{t} + \frac{q}{1-q}\frac{1}{T}}\\
\nonumber
Var (Y_t | y_T, x_T, x_t) &= \frac{1}{\frac{1}{qt} + \frac{1}{(1-q)(T-t)}+\frac{1}{q(T-t)}}
\end{align}\n
\end{lemma}

\begin{proof}
By the model, the observed variables $Y_T, X_T, X_t$ are related to $Y_t$ as follows
\s\begin{align}
\label{X_t}
X_t &= Y_t + \sum\limits_{i=0}^t A_i \\
X_T &= Y_t + \sum\limits_{i=t+1}^T A_i + \sum\limits_{i=t+1}^T B_i  + \sum\limits_{i=0}^t A_i \\
\label{Y_T}
Y_T &= Y_t + \sum\limits_{i=t+1}^T B_i
\end{align}\n

Define a random variable $Z := Y_T - X_T + X_t$, and let $z := y_T - x_T + x_t$ be its observed value. Then, by the above
\s\begin{align}
\label{Z}
Z &= Y_t -  \sum\limits_{i=t+1}^T A_i
\end{align}\n

Now, observing \eqref{X_t}, \eqref{Y_T} and \eqref{Z}, as $\sum\limits_{i=0}^t A_i$, $\sum\limits_{i=t+1}^T B_i$ and $\sum\limits_{i=t+1}^T A_i$ are pairwise independent, $x_t, y_T$ and $z$ are three observations of $Y_t$ with independent ``noises'', each with normal distribution of zero mean, and variances $qt, (1-q)(T-t)$ and $q(T-t)$, respectively. The posterior distribution of $Y_t$ is therefore normal with
\s\begin{align*}
\mu(Y_t | y_T, x_T, x_t) &= \frac{\frac{x_t}{qt} + \frac{y_T}{(1-q)(T-t)} + \frac{z}{q(T-t)}}{\frac{1}{qt} + \frac{1}{(1-q)(T-t)}+\frac{1}{q(T-t)}}\\
Var (Y_t | y_T, x_T, x_t) &= \frac{1}{\frac{1}{qt} + \frac{1}{(1-q)(T-t)}+\frac{1}{q(T-t)}}
\end{align*}\n

Substituting $z = y_T - x_T + x_t$ and simplifying
\s\begin{align*}
\mu(Y_t | y_T, x_T, x_t) &= \frac{\frac{x_t}{qt} + \frac{y_T}{(1-q)(T-t)} + \frac{z}{q(T-t)}}{\frac{1}{qt} + \frac{1}{(1-q)(T-t)}+\frac{1}{q(T-t)}} = \frac{\frac{x_t}{t} + \frac{1}{1-q}\frac{y_T}{T} - \frac{x_T}{T}}{\frac{1}{t} + \frac{q}{1-q}\frac{1}{T}}
\end{align*}\n

\noindent which completes the proof.
\qed
\end{proof}

\begin{lemma}
\label{redundant}
\s$$\Pr(Y_t | \mathcal{Z}) = \Pr(Y_t | X_t, X_T, Y_T)$$\n
\end{lemma}

\begin{proof}
We write, using Bayes' rule and the chain rule
\s\begin{align}
\nonumber
\Pr[Y_t | \mathcal{Z}] &= \frac{\Pr[Y_t, \mathcal{Z}]}{\Pr[\mathcal{Z}]} \\
\nonumber
&= \frac{\Pr[\mathcal{Z}_T, Y_t, X_{t+1}, \ldots, X_{T-1} | X_t, X_T, Y_T]}{\Pr[\mathcal{Z}_T, X_{t+1}, \ldots, X_{T-1} | X_t, X_T, Y_T]} \\
\label{chain-rule}
&= \Pr[Y_t | X_t, X_T, Y_T] \frac{\Pr[\mathcal{Z}_T, X_{t+1}, \ldots, X_{T-1} | X_t, X_T, Y_T, Y_t]}{\Pr[\mathcal{Z}_T, X_{t+1}, \ldots, X_{T-1} | X_t, X_T, Y_T]}
\end{align}\n

Define $\mathcal{Z}' = \mathcal{Z}_T - Y_T$ to be the set of random variables $Z - Y_T$ with $Z \in \mathcal{Z}_T$, and for every $\tau \in [t+1, T-1]$ define
\s\begin{align*}
R_\tau := \frac{\tau - t}{T-t}(X_\tau - X_T) + \frac{T-\tau}{T-t}(X_\tau - X_t)
\end{align*}\n

As given $X_t, X_T$ there is a one-to-one correspondence between $X_\tau$ and $R_\tau$, and given $Y_T$ there is a one-to-one correspondence between $\mathcal{Z}_T$ and $\mathcal{Z}'$
\s\begin{align}
\label{subs1}
\Pr[\mathcal{Z}_T, X_{t+1}, \ldots, X_{T-1} | X_t, X_T, Y_T, Y_t] &= \nonumber \\
\Pr[\mathcal{Z}', R_{t+1}, \ldots, R_{T-1} | X_t, X_T, Y_T, Y_t] \\
\label{subs2}
\Pr[\mathcal{Z}_T, X_{t+1}, \ldots, X_{T-1} | X_t, X_T, Y_T] &= \nonumber \\ 
\Pr[\mathcal{Z}', R_{t+1}, \ldots, R_{T-1} | X_t, X_T, Y_T]
\end{align}\n

By \eqref{covxx}, \eqref{covxy}
\s\begin{align*}
Cov(R_\tau, X_t) &= \frac{\tau - t}{T-t}(t - t) + \frac{T-\tau}{T-t}(t - t) = 0 \\
Cov(R_\tau, Y_t) &= \frac{\tau - t}{T-t}(1-q)(t - t) + \frac{T-\tau}{T-t}(1 - q)(t - t) = 0 \\
Cov(R_\tau, X_T) &= \frac{\tau - t}{T-t}(\tau - T) + \frac{T-\tau}{T-t}(\tau - t) = 0 \\
Cov(R_\tau, Y_T) &= \frac{\tau - t}{T-t}(1-q)(\tau - T) + \frac{T-\tau}{T-t}(1 - q)(\tau - t) = 0
\end{align*}\n

Also, for every $Z \in \mathcal{Z}'$ and every $\tau \leq T$, $Cov(Z, X_\tau) = Cov(Z, Y_\tau) = 0$.

Consequently, applying Lemma \ref{joint}
\s\begin{align*}
\Pr[\mathcal{Z}', R_{t+1}, \ldots, R_{T-1} | X_t, X_T, Y_T, Y_t] &= \Pr[\mathcal{Z}', R_{t+1}, \ldots, R_{T-1}] \\
\Pr[\mathcal{Z}', R_{t+1}, \ldots, R_{T-1} | X_t, X_T, Y_T] &= \Pr[\mathcal{Z}', R_{t+1}, \ldots, R_{T-1}]
\end{align*}\n

Substituting the above in \eqref{subs1}, \eqref{subs2}, \eqref{chain-rule}, $\Pr[Y_t | \mathcal{Z}] = \Pr[Y_t | X_t, X_T, Y_T]$, as claimed.
\qed
\end{proof}

A corollary of Lemma \ref{redundant} and Lemma \ref{three-vars} is that \eqref{y_t_mean} is a sufficient statistic for $y_t$ given the random variables $\mathcal{Z}$.

The proof of the proposition is now easily concluded: From Lemma \ref{redundant} with Lemma \ref{three-vars} we conclude
\s\begin{align*}
\mu(Y_t | \mathcal{Z}) &= \frac{\frac{x_t}{t} + \frac{1}{1-q}\frac{y_T}{T} - \frac{x_T}{T}}{\frac{1}{t} + \frac{q}{1-q}\frac{1}{T}}\\
Var (Y_t | \mathcal{Z}) &= \frac{1}{\frac{1}{qt} + \frac{1}{(1-q)(T-t)}+\frac{1}{q(T-t)}}
\end{align*}\n

Combining the above with Lemma \ref{Y_t},
\s\begin{align*}
\mu(x_0 | \mathcal{Z}) &= \frac{\frac{x_t}{t} + \frac{1}{1-q}\frac{y_T}{T} - \frac{x_T}{T}}{\frac{1}{t} + \frac{q}{1-q}\frac{1}{T}}\\
Var (x_0 | \mathcal{Z}) &= \frac{1}{\frac{1}{qt} + \frac{1}{(1-q)(T-t)}+\frac{1}{q(T-t)}} + (1-q)t = \frac{1}{\frac{1}{t} + \frac{q}{1-q}\frac{1}{T}}
\end{align*}\n

\noindent as claimed.
\qed
\end{proof}

\subsection{Truth is Best Policy}

Does an expert gain or lose by deviating from the truth, reporting a prediction that is different from his actual belief? When allowed a single prediction, the fact that the logarithmic scoring rule is proper means that the expert's optimal policy is to predict truthfully. If allowed multiple predictions, this is not clear-cut:  A false prediction misdirects the market, so that a subsequent true prediction reaps the added benefit of correcting the misdirection. Plausibly, the gain of the latter outweighs the loss of the former.

We shall, however, show
\begin{proposition}
\label{truth}
If allowed periods when to predict are fixed, it is an equilibrium for the expert to make truthful predictions and for the market to take his predictions as truthful.
\end{proposition}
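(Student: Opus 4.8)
The plan is to prove this by a one-shot deviation argument built on Propositions \ref{at-prediction} and \ref{after-prediction}. I fix the market's belief that the expert is truthful, fix a truthful continuation at every prediction time but one, and argue that a unilateral deviation at that single time cannot raise the expert's expected total reward; since the logarithmic rule is proper, this already holds at the \emph{last} prediction (no continuation to exploit), which anchors a backward induction over the fixed prediction times. The crucial structural observation is that a truthful prediction \emph{resets} the market: by Proposition \ref{at-prediction} the posterior immediately after a truthful report $y_t$ is $N(y_t,(1-q)t)$, a function of $y_t$ alone and independent of the entire report history. Hence a lie told at prediction time $T$ influences only two rewards — the reward $W_T$ collected at $T$ and the reward $W_t$ collected at the \emph{next} prediction time $t<T$ — while leaving rewards made before $T$ (at times $>T$) and all rewards after $t$ untouched. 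This reduces the whole statement to analyzing two consecutive predictions.

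So suppose at $T$ the expert's true signal is $y_T$ but he reports $\hat y_T=y_T+\delta$, and at the next time $t$ he reports truthfully. Evaluating $\E_T[W_T]$ from \eqref{W} with $x_0\sim N(y_T,(1-q)T)$, every term is $\delta$-independent except $-(y_T-\hat y_T)^2/(2(1-q)T)$, so the lie costs exactly $\delta^2/(2(1-q)T)$ at $T$ — a purely quadratic loss. For the reward at $t$, the market's pre-prediction mean is given by Proposition \ref{after-prediction} with $y_T$ replaced by $\hat y_T$; writing $\mu_-(\delta)=\mu_0+c\delta$ with $\mu_0$ the honest mean, $c=\tfrac{1}{(1-q)T}/D$ and $D:=\tfrac1t+\tfrac{q}{1-q}\tfrac1T$, while the market variance $\sigma_-^2=1/D$ is unaffected by $\delta$. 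The only $\delta$-dependence of $\E_T[W_t]$ therefore lives in $\E_T[(x_0-\mu_0-c\delta)^2]/(2\sigma_-^2)$, whose expansion contributes a linear term proportional to $\E_T[x_0-\mu_0]$ and a quadratic term $c^2\delta^2/(2\sigma_-^2)$.

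The heart of the argument, and what I expect to be the main obstacle, is showing that the linear term vanishes, i.e. $\E_T[x_0-\mu_0]=0$. Since the expert's belief gives $\E_T[x_0]=y_T$, this reduces to the martingale-flavored identity $\E_T[\mu_0]=y_T$: forecasting forward from $T$, the expert expects the honestly-updated market mean at $t$ to sit exactly at his own current estimate. I would prove it by computing $\E_T[x_t]=y_T+\tfrac{t}{T}(x_T-y_T)$ — using that at $T$ the expert knows both $y_T$ and $x_T$, hence knows $X_T-Y_T=x_T-y_T$, which by symmetry of the i.i.d. $A_i$ fixes $\E_T[\sum_{i\le t}A_i]=\tfrac{t}{T}(x_T-y_T)$ while $\E_T[Y_t]=y_T$ — and substituting into the formula for $\mu_0$, whereupon its numerator collapses to $y_T\,D$ and the claim follows. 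Care is needed here in taking the conditional expectations in the improper-prior limit $\sigma_0^2\gg T_{max}$.

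With the linear term gone, the total $\delta$-dependence of $\E_T[W_T+W_t]$ is the single quadratic $\tfrac{1}{2}\bigl(c^2/\sigma_-^2-\tfrac{1}{(1-q)T}\bigr)\delta^2=\tfrac{1}{2(1-q)T}\bigl(\tfrac{1}{(1-q)TD}-1\bigr)\delta^2$. Truthfulness ($\delta=0$) is optimal exactly when this coefficient is non-positive, i.e. when $(1-q)TD\ge 1$; and indeed $(1-q)TD=\tfrac{(1-q)T}{t}+q\ge(1-q)+q=1$ since $T\ge t$, with equality only if $T=t$. Thus the quadratic loss at $T$ always dominates the quadratic gain at $t$, no single deviation pays, and truth-telling — with the market rationally taking reports at face value, which then confirms its belief — is an equilibrium. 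The only remaining bookkeeping is to package this two-prediction computation with the one-shot-deviation principle so as to cover arbitrarily many fixed prediction times.
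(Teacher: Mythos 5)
Your proposal is correct and follows essentially the same route as the paper's proof: reduce to a one-shot deviation that affects only the two consecutive rewards at $T$ and the next prediction time $t$, show the resulting linear (cross) term vanishes, and compare the two quadratic terms via $(1-q)T/t + q \geq 1$ — the only difference being that you establish the vanishing of the linear term by directly computing $\E_T[x_t] = y_T + \tfrac{t}{T}(x_T - y_T)$ and substituting into the posterior-mean formula, whereas the paper gets it from the Law of Total Expectation ($\E_T[\mu_{t-}] = \mu_{T+} = \E_T[x_0]$). One small point the paper handles explicitly that you gloss over: the next prediction time $t$ may itself be a random variable determined by signals arriving after $T$, but since your quadratic coefficient is non-positive for every realization of $t < T$, taking the expectation over $t$ preserves the conclusion.
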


\begin{proof}

It is enough to show that if the market takes the expert's prediction as truthful, the expert cannot improve his expected reward by untruthfulness.

If the expert makes a single prediction, the result follows from the fact that the logarithmic scoring rule is proper. Therefore, assume more than one prediction, and let us focus on any consecutive pair of predictions.

Assume that an expert makes two consecutive predictions at times $T$ and $t < T$. The timing of the latter prediction at $t$ need not be known at $T$. It may depend on the expert's policy and signals revealed later than $T$. Given the expert's policy, at $T$, $t \in \Delta([T-1])$ is a random variable whose realization is conditioned on signals known after $T$.\footnote{It is here that the assumption that prediction periods are fixed is used. If not, $t$ may conditionally not exist.}

Let the market prediction at $T$ prior to an expert prediction be $X_{T-}\sim N(\mu_{T-},\sigma_{T-}^2)$. If the expert makes a truthful prediction, let the market's posterior prediction at $T$ be $X_{T+}\sim N(\mu_{T+},\sigma_{T+}^2)$, and the market's prior prediction at $t$ be $X_{t-}\sim N(\mu_{t-},\sigma_{t-}^2)$. If the expert predicts truthfully at $t$, let the market's posterior prediction at $t$ be $X_{t+}\sim N(\mu_{t+},\sigma_{t+}^2)$.

\begin{lemma}
\label{gain-loss}
Assume that the expert misrepresents his prediction mean at $T$ by an amount $c_T$ and that, as a result, the market's prior prediction mean at $t$, inferred from the (wrong) assumption that the expert is truthful, is distorted by $c_t$. Then the expert's net gain/loss expectation (at $T$) from the misrepresentation is
\s\begin{align*}
\E\limits_T[\Delta W] = \E\limits_T\Bigl[\frac{c_t^2}{2\sigma_{t-}^2}\Bigr] - \frac{c_T^2}{2\sigma_{T+}^2}
\end{align*}\n
\end{lemma}

\begin{proof}
By Propositions \ref{at-prediction} and \ref{after-prediction}, the expert's prediction variance is independent of any signal or predicted value. The expert's inferred variance is therefore unaffected by truthfulness.

We also observe in Proposition \ref{after-prediction} that $c_t$, the difference made to $\mu(x_0 |\mathcal{Z})$ by the expert declaring $y_T + c_T$ rather than $y_T$, is proportional to $c_T$ and is independent of any expert or market signal. For, by Propositions \ref{at-prediction} and \ref{after-prediction}
\s\begin{align*}
\sigma_{T+}^2 &= (1-q) T \\
\sigma_{t-}^2 &= \frac{1}{\frac{1}{t} + \frac{q}{1-q}\frac{1}{T}}
\end{align*}\n

\noindent and therefore
\s\begin{align}
\label{c_ratio}
c_t &= \frac{\frac{x_t}{t} + \frac{1}{1-q}\frac{y_T + c_T}{T} - \frac{x_T}{T}}{\frac{1}{t} + \frac{q}{1-q}\frac{1}{T}} - \frac{\frac{x_t}{t} + \frac{1}{1-q}\frac{y_T}{T} - \frac{x_T}{T}}{\frac{1}{t} + \frac{q}{1-q}\frac{1}{T}} = \frac{\sigma_{t-}^2}{\sigma_{T+}^2} c_T
\end{align}\n

By Propositions \ref{at-prediction} and \ref{after-prediction} again, the market's prediction is affected only by the expert prediction that immediately preceded it, and does not depend on any earlier predictions. Therefore a false prediction at $T$ has no effect on the expert's reward for any prediction made later than $t$ or earlier than $T$. Furthermore, for the two affected rewards at $T, t$, only some terms are affected: Referring to \eqref{W}, a false prediction at time $T$ affects only the term $- \frac{(x_0 - \mu_{+})^2}{2\sigma_{+}^2}$ at $T$, and only the term $+ \frac{(x_0 - \mu_{-})^2}{2\sigma_{-}^2}$ at $t$.

We calculate the expectation $\E\limits_T[\cdot]$ of the difference the false prediction makes to each of these two affected terms, i.e., the expectation at $T$ taken over the expert's true distribution.

\begin{itemize}
\item Difference to reward at $T$:

In the affected term $- \frac{(x_0 - \mu_{+})^2}{2\sigma_{+}^2}$, note that $\mu_{+} = \mu_{T+}$ if the expert predicted truthfully, while $\mu_{+} = \mu_{T+} + c_T$ if he lied about his mean.

The reward expectation difference, according to the expert's distribution at $T$, is therefore
\s\begin{align}
\label{diff-at-T}
\E\limits_T\Bigl[-\frac{(x_0 - \mu_{T+} - c_T)^2}{2 \sigma_{T+}^2}  + \frac{(x_0 - \mu_{T+})^2}{2 \sigma_{T+}^2}\Bigr] &= \frac{-c_T^2 + 2c_T\E\limits_T[x_0 - \mu_{T+}]}{2 \sigma_{T+}^2} \nonumber \\ 
&= - \frac{c_T^2}{2 \sigma_{T+}^2}
\end{align}\n

\item Difference to reward at $t$:

In the affected term $+ \frac{(x_0 - \mu_{-})^2}{2\sigma_{-}^2}$, note that $\mu_{-} = \mu_{t-}$ if the expert predicted truthfully, while $\mu_{-} = \mu_{t-} + c_t$ if he lied about his mean.

The reward expectation difference, according to the expert's distribution at $T$, is therefore
\s\begin{align}
\E\limits_T\Bigl[\frac{(x_0 - \mu_{t-} - c_t)^2}{2 \sigma_{t-}^2}  - \frac{(x_0 - \mu_{t-})^2}{2 \sigma_{t-}^2}\Bigr] = \nonumber \\
\E\limits_T\Bigl[\frac{c_t^2}{2 \sigma_{t-}^2}\Bigr] - \E\limits_T\Bigl[\frac{c_t (x_0 - \mu_{t-})}{\sigma_{t-}^2}\Bigr]
 \label{diff-at-t}
\end{align}\n

From \eqref{c_ratio}, $\frac{c_t}{\sigma_{t-}^2} = \frac{c_T}{\sigma_{T+}^2}$ is a non-random constant (given information known at $T$). Therefore \s$$\E\limits_T\Bigl[\frac{c_t (x_0 - \mu_{t-})}{\sigma_{t-}^2}\Bigr] = \frac{c_T}{\sigma_{T+}^2}\E\limits_T[x_0 - \mu_{t-}]$$\n

To evaluate $\E\limits_T[x_0 - \mu_{t-}]$, observe that $X_{t-}$ is a conditional distribution on $X_{T+}$. By the Law of Total Expectation
\s\begin{align*}
\E\limits_{T+}[\E\limits_{t-}[X_{t-}]] = \E\limits_{T+}[X_{T+}]
\end{align*}\n
I.e.,
\s\begin{align*}
\E\limits_{T+}[\mu_{t-}] = \E\limits_T[\mu_{t-}] = \mu_{T+}
\end{align*}\n

\noindent since $\E\limits_T[x_0] = \mu_{T+}$, we conclude $\E\limits_T[x_0 - \mu_{t-}] = 0$, and we get from \eqref{diff-at-t}
\s\begin{align}
\label{diff-at-tT}
\E\limits_T\Bigl[\frac{(x_0 - \mu_{t-} - c_t)^2}{2 \sigma_{t-}^2}  - \frac{(x_0 - \mu_{t-})^2}{2 \sigma_{t-}^2}\Bigr] =
\E\limits_T\Bigl[\frac{c_t^2}{2 \sigma_{t-}^2}\Bigr]
\end{align}\n
\end{itemize}

Adding \eqref{diff-at-T} and \eqref{diff-at-tT} the lemma follows.
\qed
\end{proof}

Fix any $0 < t < T$. Then by \eqref{c_ratio}
\s\begin{align*}
\frac{c_t^2}{2\sigma_{t-}^2} - \frac{c_T^2}{2\sigma_{T+}^2} = \frac{c_T^2}{2\sigma_{T+}^4}(\sigma_{t-}^2 - \sigma_{T+}^2) \leq 0
\end{align*}\n

\noindent because $\sigma_{T+}^2 / \sigma_{t-}^2 = (1-q)T/t + q \geq 1$.\footnote{Or, more generally, $\sigma_{t-}^2 \leq \sigma_{T+}^2$, because $X_{t-}$ is an inferred distribution from $X_{T+}$.}

By the lemma \s$$\E\limits_T[\Delta W] = \E\limits_T\Bigl[\frac{c_t^2}{2\sigma_{t-}^2} - \frac{c_T^2}{2\sigma_{T+}^2}\Bigr]$$\n
Since the expression under expectation is non-positive, so is the expectation, i.e., $\E\limits_T[\Delta W] \leq 0$. We conclude that for every $T$, for any expert policy, and independently of any other prediction the expert has made or will make, the reward expectation for distorting a prediction at $T$ is not positive. If the allowed periods for prediction are fixed and unaffected by the value of the expert's prediction, the expert maximizes his multi-prediction benefit expectation by making a truthful prediction at this particular prediction, and therefore at all predictions.

This completes the proof of Proposition \ref{truth}.
\qed
\end{proof}

Therefore, truthfulness is best policy for the expert. Note that if the expert's allowed prediction schedule is not fixed, the result may no longer be true. E.g., if the expert is allowed further predictions only if the discrepancy between his last prediction and the market's prediction exceeded some threshold, the expert may be motivated to distort his prediction so as to be given further prediction opportunities.

\subsection{Prediction Reward Expectation}

Having seen that there is no profit in lying, we shall from now on assume truthful predictions.

The following lemma will be useful in calculating the expected reward of a {\em future} prediction, before some of the signals it is based on are known. It shows that current and historic signals affect the reward expectation of the next prediction, but have no effect on the reward expectation of later predictions.

\begin{lemma}
\label{expect-before}
Assume that the expert is committed to making two consecutive predictions at $T$ and $t < T$. Let $X_{t-} \sim N(\mu_{-},\sigma_{-}^2)$ and $X_{t+} \sim N(\mu_{+},\sigma_{+}^2)$ be the market's distributions for $x_0$ before and after, respectively, a prediction $\mu_{+}$ is made at $t$.

Assume that $\sigma_{+}^2$ and $\sigma_{-}^2$ do not depend on any signals, but only on $T, t$ and $q$. Then, at any time $\tau \geq T$, the expected $t$-prediction reward is
\s\begin{align*}
\E_\tau[W_t] =  \log \frac{\sigma_{-}}{\sigma_{+}}
\end{align*}\n
\end{lemma}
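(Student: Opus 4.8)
The plan is to substitute the reward from \eqref{W},
\[
W_t=\log\frac{\sigma_-}{\sigma_+}+\frac{(x_0-\mu_-)^2}{2\sigma_-^2}-\frac{(x_0-\mu_+)^2}{2\sigma_+^2},
\]
and evaluate $\E_\tau[\cdot]$ term by term. By hypothesis $\sigma_-^2,\sigma_+^2$ depend only on $T,t,q$, so the logarithmic term passes through the expectation unchanged and the two variances may be pulled out of the remaining expectations. The whole lemma therefore reduces to the two second-moment identities
\[
\E_\tau\bigl[(x_0-\mu_+)^2\bigr]=\sigma_+^2,\qquad \E_\tau\bigl[(x_0-\mu_-)^2\bigr]=\sigma_-^2,
\]
since each then contributes $\tfrac12$ and the two cancel, leaving $\E_\tau[W_t]=\log(\sigma_-/\sigma_+)$.

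Both identities are instances of one principle, which I would isolate as a small observation: if $\mathcal H$ is any information set containing everything the expert knows at $\tau$, and if $\mu=\E[x_0\mid\mathcal H]$ while $Var(x_0\mid\mathcal H)$ is a deterministic constant $v$, then by the law of total expectation
\[
\E_\tau\bigl[(x_0-\mu)^2\bigr]=\E_\tau\bigl[\,\E[(x_0-\mu)^2\mid\mathcal H]\,\bigr]=\E_\tau[v]=v .
\]
For the ``$+$'' term I would take $\mathcal H$ to be the expert's information at time $t$. Since $t<\tau$, the expert knows at least as much at $t$ as at $\tau$, so $\mathcal H$ contains the time-$\tau$ information; truthfulness together with Proposition \ref{at-prediction} gives $\mu_+=\E[x_0\mid\mathcal H]$ and $Var(x_0\mid\mathcal H)=(1-q)t=\sigma_+^2$, and the principle yields the first identity.

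For the ``$-$'' term I would take $\mathcal H=\mathcal Z$, the conditioning set of Proposition \ref{after-prediction}, for which $\mu_-=\E[x_0\mid\mathcal Z]$ and $Var(x_0\mid\mathcal Z)=\sigma_-^2$ is deterministic. The second identity then follows from the same principle, \emph{provided} the expert's time-$\tau$ information is contained in $\mathcal Z$. This containment is the one delicate point, and it is exactly where $\tau\ge T$ is used: the signals the expert knows at $\tau$ are the $X_s,Y_s$ with $s\ge\tau$, and since $\tau\ge T$ these all lie in $\mathcal Z=\mathcal Z_T\cup\{y_T,x_T,\dots,x_t\}$ (recall $\mathcal Z_T$ collects the signals of all periods after $T$, while $y_T,x_T$ cover period $T$ itself). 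I expect this to be the main obstacle, because $\mathcal Z$ contains the expert's \emph{private} past signals $Y_s$, whose presence is what makes the containment hold even though, by Lemma \ref{redundant}, those same variables are redundant for the value of $\mu_-$. Equivalently, one may verify directly through Lemma \ref{joint} that the market's posterior error $x_0-\mu_-$ is uncorrelated with every time-$\tau$ signal, the only nontrivial check being $Cov(x_0-\mu_-,Y_\tau)=0$ for $\tau>T$; this holds because $Y_\tau-y_T$ is a sum of random-walk increments of periods after $T$ and is therefore independent of $x_0$ and of $y_T,x_T,x_t$.
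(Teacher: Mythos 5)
Your proof is correct, and it takes a genuinely different route from the paper's. The paper first integrates out $x_0$ under the expert's time-$t$ belief, invoking Proposition \ref{expected} to write $\E_t[W_t]$ as the KL divergence \eqref{E_W}, and then averages the one remaining random term $(\mu_+-\mu_-)^2$ over the signals arriving between $T$ and $t$: the Law of Total Variance yields $\sigma_-^2=\sigma_+^2+\E_{t-}[(\mu_+-\mu_-)^2]$ (eq.~\eqref{conditional}), and the Law of Total Expectation transports this to $T+$ (eq.~\eqref{conditional2}) and then transports the now-constant $\E_T[W_t]$ to any $\tau\ge T$. You instead work with the raw score \eqref{W} and evaluate $\E_\tau$ of each quadratic term in one shot via the tower property, conditioning the ``$+$'' term on the expert's time-$t$ information and the ``$-$'' term on the set $\mathcal Z$ of Proposition \ref{after-prediction}; each contributes $\tfrac12$ and they cancel. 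Both arguments rest on the same two pillars (signal-independent conditional variances plus the tower property), but yours bypasses Proposition \ref{expected} and the total-variance bookkeeping, and it isolates exactly where $\tau\ge T$ is needed: the containment of the time-$\tau$ information in $\sigma(\mathcal Z)$, which, as you observe, holds only because $\mathcal Z$ carries the expert's private signals $Y_s$ for $s\ge T$ even though Lemma \ref{redundant} makes them redundant for the value of $\mu_-$. What the paper's route buys is the reusable intermediate identity $\sigma_-^2=\sigma_+^2+\E_{T+}[(\mu_+-\mu_-)^2]$, which exhibits how the expected squared movement of the prediction mean exactly accounts for the variance reduction. One half-sentence worth adding to yours: the ``$-$'' identity also requires truthfulness at $T$ (not just at $t$), since $\mu_-=\E[x_0\mid\mathcal Z]$ only if the market's inference is built on the true $y_T$; this is harmless here, as the lemma sits in the part of the paper where truthful predictions are assumed throughout.
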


\begin{proof}
In proving this lemma we shall note that, since the market is Bayesian, later distributions are conditional distributions on earlier ones. Specifically, they are conditional on signals and predictions which became known since. This enables us to use general laws such as the Law of Total Expectation and the Law of Total Variance which apply to conditional distributions.

We also use the result of Proposition \ref{at-prediction}, that, immediately after a prediction, market and expert distributions are identical.

\textbf{Throughout this lemma expectations are only over histories that conform to the lemma conditions, i.e., histories that include consecutive predictions at} $T, t$.

Since $X_{t+}$ is a conditional distribution on $X_{t-}$ ($X_{t+} = X_{t-} | \mu_{+}$), we may apply the Law of Total Variance
\s\begin{align}
Var_{t-}(X_{t-}) &= \E_{t-}[Var_{t+}(X_{t+})] + Var_{t-}(\E_{t+}[X_{t+}]) \nonumber \\
\label{totalvar}
&= \E_{t-}[Var_{t+}(X_{t+})] + \E_{t-}\Bigl[\bigl(\E_{t+}[X_{t+}] - \E_{t-}[\E_{t+}[X_{t+}]]\bigr)^2\Bigr]
\end{align}\n

We have $Var_{t-}(X_{t-}) = \sigma_{-}^2$, $\E\limits_{t+}[X_{t+}] = \mu_{+}$, and as $\sigma_{+}^2$ does not depend on $\mu_{+}$, $\E\limits_{t-}[Var(X_{t+})] = \sigma_{+}^2$. By the Law of Total Expectation $\E\limits_{t-}[\E\limits_{t+}[X_{t+}]] = \E\limits_{t-}[X_{t-}] = \mu_{-}$. Substituting in \eqref{totalvar}
\s\begin{align}
\label{conditional}
\sigma_{-}^2 = \sigma_{+}^2 + \E_{t-}[(\mu_{+} - \mu_{-})^2]
\end{align}\n

As $X_{t-}$ is conditional on $X_{T+}$, by the Law of Total Expectation $\E\limits_{T+}[\E\limits_{t-}[(\mu_{+} - \mu_{-})^2]] = \E\limits_{T+}[(\mu_{+} - \mu_{-})^2]$. So, taking expectations at $T+$ of \eqref{conditional}, and noting that variances are independent of signals
\s\begin{align}
\label{conditional2}
\sigma_{-}^2 = \sigma_{+}^2 + \E_{T+}[(\mu_{+} - \mu_{-})^2]
\end{align}\n

Since immediately after a truthful prediction, market and expert's distributions are identical, \eqref{conditional} is true not only from the market's point of view but also from the expert's. We may therefore substitute it in \eqref{E_W} (where expectations are taken over expert's distribution) and get
\s\begin{align*}
\E_T[W_t] =  -\frac{1}{2} \log \frac{\sigma_{+}^2}{\sigma_{-}^2}
\end{align*}\n

We mark the expectation $\E\limits_T[\cdot]$, as the distinction before and after prediction is not relevant to the expert.

It follows, using the lemma's assumption that $t, T, q$ are held constant, that $\E\limits_T[W_t]$ is constant. Therefore, for any $\tau \geq T$, by the Law of Total Expectation
\s\begin{align*}
\E_\tau[W_t]  = \E_\tau[\E_T[W_t]] = -\frac{1}{2} \log \frac{\sigma_{+}^2}{\sigma_{-}^2} = \log \frac{\sigma_{-}}{\sigma_{+}}
\end{align*}\n

\noindent as claimed.
\qed
\end{proof}

We remark here that it is easy to verify in the proof above, that the lemma also holds {\em a priori}, before the expert and market have received their first signal. Consequently by Proposition \ref{first} the {\em a priori} expected benefit of a first prediction is $\frac{1}{2} \log \frac{1}{1-q}$.

Another consequence is the following proposition
\begin{proposition}
\label{consecutive}
Assume that the expert is committed to making two consecutive predictions at $T$ and $t < T$. Then, at or before $T$, the reward expectation of the latter prediction is
\s\begin{align*}
\E[W_t] = -\frac{1}{2} \log \Bigl(1 - q\frac{T-t}{T}\Bigr)
\end{align*}\n
\end{proposition}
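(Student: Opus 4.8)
The plan is to recognize this proposition as an immediate consequence of Lemma \ref{expect-before}, which has already reduced the expected reward of the $t$-prediction to $\log(\sigma_{-}/\sigma_{+})$, a quantity depending only on the two market variances at $t$. First I would verify that the hypotheses of that lemma are met: by Propositions \ref{at-prediction} and \ref{after-prediction}, both $\sigma_{-}^2$ and $\sigma_{+}^2$ are functions of $T, t, q$ alone and carry no dependence on the realized signals. Hence the lemma applies and $\E[W_t] = \log(\sigma_{-}/\sigma_{+})$ holds at any $\tau \geq T$, i.e., at or before $T$.

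Next I would pin down the two variances. The post-prediction variance $\sigma_{+}^2$ is read off from Proposition \ref{at-prediction}: immediately after the truthful prediction at $t$ the market adopts the expert's distribution $Y_t \sim N(y_t, (1-q)t)$, so $\sigma_{+}^2 = (1-q)t$. The pre-prediction variance $\sigma_{-}^2$ is the market's posterior variance at $t$ given the earlier prediction at $T$ and the intervening market signals (with no new prediction), which is exactly the $Var(x_0 \mid \mathcal{Z})$ supplied by Proposition \ref{after-prediction}:
\[
\sigma_{-}^2 = \frac{1}{\frac{1}{t} + \frac{q}{1-q}\frac{1}{T}}.
\]

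Finally, I would substitute and simplify. Forming the ratio $\sigma_{-}^2/\sigma_{+}^2$ and clearing the compound denominator gives $T/\bigl((1-q)T + qt\bigr)$, and since $(1-q)T + qt = T\bigl(1 - q\frac{T-t}{T}\bigr)$ this equals $1/\bigl(1 - q\frac{T-t}{T}\bigr)$. Taking half the logarithm then yields
\[
\E[W_t] = \log\frac{\sigma_{-}}{\sigma_{+}} = \tfrac{1}{2}\log\frac{\sigma_{-}^2}{\sigma_{+}^2} = -\tfrac{1}{2}\log\Bigl(1 - q\tfrac{T-t}{T}\Bigr),
\]
as claimed. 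There is no genuine obstacle here: the proposition is a corollary of the preceding results, and the only point demanding care is confirming the lemma's signal-independence hypothesis, which Propositions \ref{at-prediction} and \ref{after-prediction} guarantee, together with the routine algebraic simplification of the variance ratio.
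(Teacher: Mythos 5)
Your proposal is correct and follows exactly the paper's own route: invoke Lemma \ref{expect-before} after noting that $\sigma_{+}^2 = (1-q)t$ (Proposition \ref{at-prediction}) and $\sigma_{-}^2 = \bigl(\tfrac{1}{t} + \tfrac{q}{1-q}\tfrac{1}{T}\bigr)^{-1}$ (Proposition \ref{after-prediction}) depend only on $t, T, q$, then simplify the variance ratio. The algebraic simplification matches the paper's as well, so there is nothing to add.
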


\begin{proof}
By Proposition \ref{after-prediction} $\sigma_{-}^2 = \frac{1}{\frac{1}{t} + \frac{q}{1-q}\frac{1}{T}}$, while by Proposition \ref{at-prediction} $\sigma_{+}^2 = (1 - q) t$.

These depend on $t, T$ and $q$ only, so by Lemma \ref{expect-before}, the reward expectation at every $\tau \geq T$ is
\s\begin{align*}
\E_\tau[W_t] &=  \log \frac{\sigma_{-}}{\sigma_{+}} = -\frac{1}{2} \log \Bigl( (1 - q) t\Bigl[\frac{1}{t} + \frac{q}{1-q}\frac{1}{T}\Bigr]\Bigr) \\ 
&= -\frac{1}{2} \log \Bigl(1 - q\frac{T-t}{T}\Bigr)
\end{align*}\n
\qed
\end{proof}

We can now prove the main result: The best strategy is to make predictions whenever allowed, and to make these predictions truthfully, as already shown (Proposition \ref{truth}).

\begin{theorem}
\label{predict-always}
If allowed periods for prediction are fixed, an expert maximizes his reward by making predictions at every allowed period, speaking the truth at all predictions.
\end{theorem}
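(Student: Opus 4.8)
The truthfulness half is already established by Proposition \ref{truth}, so the plan is to assume truthful predictions throughout and to show that, among all schedules consisting of a subset of the fixed allowed periods, predicting at every allowed period is optimal. First I would argue that the expert's total expected reward decomposes as a sum of per-gap contributions, each depending only on the times of two consecutive predictions and on $q$. By Propositions \ref{at-prediction} and \ref{after-prediction} the market's distribution at any time depends only on the single most recent expert prediction, so the reward of a prediction at $t$ whose immediate predecessor is at $T$ does not interact with any earlier prediction; and by Proposition \ref{consecutive} (via Lemma \ref{expect-before}) its expectation, evaluated a priori, is the signal-independent constant $g(T,t) := -\frac12\log\bigl(1 - q\frac{T-t}{T}\bigr)$. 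Treating the uninformed market as a virtual prediction at $T=\infty$, the first prediction contributes $g(\infty,t)=\frac12\log\frac1{1-q}$, independent of its time. Hence for a schedule $T_1 > \cdots > T_k$ the a priori expected total reward is the deterministic quantity $\sum_{i=1}^k g(T_{i-1},T_i)$ with $T_0:=\infty$, turning a stochastic control problem into a purely combinatorial optimization over subsets of the allowed periods.

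The core step is a superadditivity (and nonnegativity) property of $g$: for $0<t<s<T$,
\[
g(T,t) \le g(T,s) + g(s,t), \qquad g(T,t)\ge 0.
\]
Nonnegativity is immediate since $1 - q\frac{T-t}{T}\in(0,1]$. For superadditivity I would rewrite $g(T,t)=\frac12\log\frac{T}{(1-q)T+qt}$, exponentiate the claimed inequality, and cross-multiply; after expanding, all terms collapse to the single condition $q(1-q)(s-t)(T-s)\ge 0$, which holds because $t<s<T$, with equality only in the degenerate cases $q\in\{0,1\}$. The limiting case $T=\infty$ reduces to $g(s,t)\ge 0$, so it is covered by nonnegativity.

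With these two facts the conclusion follows by an insertion argument. Given any schedule $S$ that omits an allowed period $s$, inserting $s$ either subdivides the existing gap $(T_{i-1},T_i)$ that contains it, raising the total by $g(T_{i-1},s)+g(s,T_i)-g(T_{i-1},T_i)\ge 0$ by superadditivity, or appends $s$ before the first or after the last prediction, raising the total by a nonnegative $g$-term. Since the allowed set is finite, repeatedly inserting every missing period never decreases the expected reward, so predicting at every allowed period is a maximizer; combined with Proposition \ref{truth} this proves the theorem.

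I expect the main obstacle to be rigorously justifying the additive, signal-independent decomposition rather than the algebra. One must be sure that (i) non-adjacent predictions genuinely do not interact, so there are no cross terms, and (ii) the expectation of each gap reward is a constant unaffected by realized signals or predicted values, so that the Law of Total Expectation lets the a priori total split as $\sum g(T_{i-1},T_i)$; the latter also means no adaptive rule for skipping can beat the fixed ``predict always'' schedule, since the expected gap reward is the same constant regardless of conditioning. Both facts are supplied by Propositions \ref{at-prediction}, \ref{after-prediction} and \ref{consecutive} together with Lemma \ref{expect-before}, but invoking them in the right order, and noting that the fixed-schedule hypothesis is exactly what makes each gap reward well defined and signal independent, is the delicate part; once this is in place the optimization reduces to the elementary superadditivity inequality.
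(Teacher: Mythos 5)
Your superadditivity argument is correct as far as it goes: the algebra reducing $g(T,t)\le g(T,s)+g(s,t)$ to $q(1-q)(s-t)(T-s)\ge 0$ checks out, and together with the insertion argument it proves that among all \emph{fixed} (non-adaptive) schedules, predicting at every allowed period maximizes the a priori expected reward. But there is a genuine gap: the theorem must rule out \emph{adaptive} strategies, where the expert decides whether to speak at an allowed period as a function of the signals he has observed, and your argument does not do this. The decomposition $\sum_i g(T_{i-1},T_i)$ into signal-independent constants is valid only because the schedule is fixed in advance: Lemma \ref{expect-before} explicitly assumes the expert is \emph{committed} to predicting at $T$ and $t$, and its expectations are taken only over histories conforming to that commitment. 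Once the decision to predict at $t$ may depend on observed signals, conditioning on ``the strategy chose to predict at $t$'' changes the distribution of the discrepancy, and the expected gap reward is no longer $g(T,t)$. Your parenthetical justification --- that ``the expected gap reward is the same constant regardless of conditioning'' --- is false: by Proposition \ref{first} (and \eqref{expect-after-t}), the expected reward of a prediction, conditional on the signals at the moment of prediction, is $\frac{(x_t-y_t)^2}{2t}-\frac{1}{2}\bigl(q+\log(1-q)\bigr)$, which is signal-dependent; only its expectation taken at or before the \emph{previous} prediction time is constant (Lemma \ref{expect-before}).

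That this gap is fatal in general, not a technicality, is shown by the single-prediction case of \cite{azar2016should}: there every fixed one-shot schedule has a priori expected reward $\frac{1}{2}\log\frac{1}{1-q}$ independent of its timing, yet adaptive timing --- waiting for a large realized discrepancy $(x_t-y_t)^2$ --- achieves reward of order $q\log\log t$, strictly beating every fixed schedule. So in this very model adaptivity can help, and any correct proof must establish a pointwise, signal-by-signal comparison. This is exactly what the paper's proof does and your proposal omits: it proceeds by backward induction on the number of remaining allowed periods, and shows that for \emph{every} realization of $x_{t_1},y_{t_1}$ one has $\E_{t_1}[W_{\le t_1}]\ge\E_{t_1}[W_{\le t_2}\mid x_{t_1},y_{t_1}]$, using the martingale computations of Lemma \ref{at-t} and Lemma \ref{Z_t_2} to evaluate the conditional expectation of a future prediction's reward given current signals. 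The signal-dependent gain term $\frac{(x^*_{t_1}-y_{t_1})^2}{\sigma_{t_1}^2}\bigl(1-\sigma_{t_2}^2/\sigma_{t_1}^2\bigr)$ from speaking now is what compensates for the option value of waiting, and controlling it is the real work of the theorem. Your combinatorial insertion argument is a clean way to organize the fixed-schedule comparison (and it reproduces $\Xi(T)$ of Theorem \ref{average}), but it cannot substitute for that induction.
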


Here is an overview of the proof, which follows below.

The proof is by induction on the number of allowed remaining periods for prediction, so that the induction assumption is that the expert will make truthful predictions in all $k-1$ last allowed periods, and the proof is completed by showing that the expert will also speak in the $k$'th last period. We therefore need to show that, at the $k$'th last period, and whatever signals he is observing, the expert gives higher reward expectation for making a prediction at this period (and all $k-1$ subsequent ones), than to staying silent, and making the next prediction at the $k-1$'th last period (and all subsequent ones).

The reward expectation for all but the next prediction is calculated using Proposition \ref{consecutive}. For the next prediction, if it takes place at the $k$'th last period, the expectation is computed from the expert's observed signals, while if it takes place at the $k-1$'th last period (i.e., in the future, relative to when expectation is computed), the expectation is computed based on the fact that a certain function of observed signals is a martingale. The total expectation is shown to be always greater when making a prediction at the $k$'th last period.

\begin{proof}
Let $K \subseteq [T_{max}]$ be the allowed periods for prediction.

We shall prove the theorem by induction. To this end we revise the theorem claim as follows:

\textbf{Claim:} For every $T$, an expert maximizes his reward by making truthful predictions at every period with $t < T, t \in K$ (regardless of which signals he receives before or after $T$, and regardless of his prediction history prior to $T$).

Clearly when $T = T_{max}$, this claim is equivalent to our theorem. We prove it by induction on the number of remaining allowed periods for prediction $n = n(T,K) := |K \cap [T]|$.

For $n = 1$, the theorem is true, because, by Proposition \ref{expected}, the expected reward for making each truthful prediction is non-negative. When $n = 1$, this is also the total expected reward, and so is weakly preferred to not making a prediction, a choice whose total reward is $0$.

As the induction hypothesis, we assume the claim true for $n-1$. We proceed to prove it for $n$:

Let $A \cap [T] = \{t_1, t_2, \ldots, t_n\}$ with $t_1 > t_2 > \ldots > t_n$. By the induction hypothesis, the expert will make predictions at all periods $K \setminus t_1$.  We must show that, for all histories at $t_1$, an expert who is informed of $x_{t_1}, y_{t_1}$ prefers to make his next prediction immediately over making it at $t_2$.

We treat two cases

\begin{enumerate}[(i)]
\item The expert has not already made a prediction, i.e., $t_1, t_2$ are the earliest periods in $K$.

\begin{lemma}
\label{at-t}
At $T$, for every $t \leq T$
\s\begin{align*}
\E_T[(x_t - y_t)^2] = \frac{t^2}{T^2}(x_T - y_T)^2 + q\frac{t(T-t)}{T}
\end{align*}\n
\end{lemma}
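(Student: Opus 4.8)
The plan is to reduce the whole statement to the single random variable $X_t - Y_t$ and then exploit the i.i.d.\ structure of the expert's ``knowledge'' steps $A_\tau$. First I would observe, directly from the model definitions, that $X_t - Y_t = \sum_{\tau=1}^t A_\tau$, since $X_t = X_0 + \sum_{\tau=1}^t (A_\tau + B_\tau)$ while $Y_t = X_0 + \sum_{\tau=1}^t B_\tau$, so that the $B_\tau$ and $X_0$ cancel. Writing $S_t := \sum_{\tau=1}^t A_\tau$, the quantity to be evaluated is simply $\E_T[S_t^2]$, the expert's conditional second moment of $S_t$ given his information at $T$.

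Next I would pin down exactly what the information available at $T$ says about $S_t$. At $T$ the expert observes both the public $x_T$ and his private $y_T$, hence their difference $x_T - y_T = S_T = \sum_{\tau=1}^T A_\tau$. The remaining information known at $T$, namely the signals $x_\tau, y_\tau$ for $\tau > T$ (equivalently the steps $A_\tau, B_\tau$ with $\tau > T$), involves only indices larger than $T$; since the steps are independent across periods, these are independent of $\{A_\tau : \tau \le T\}$ and therefore of $S_t$. If a formal certification is wanted, this independence follows from the covariance criterion of Lemma \ref{joint}. Consequently the conditional law of $S_t$ at $T$ is just its law given the observed endpoint $S_T = x_T - y_T$.

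Then I would carry out the Gaussian conditioning. Splitting $S_T = S_t + (S_T - S_t)$ into the two independent normals $S_t \sim N(0, qt)$ and $S_T - S_t \sim N(0, q(T-t))$ (their index sets are disjoint), the standard formulas for jointly Gaussian variables give the conditional mean $\E[S_t \mid S_T] = \frac{qt}{qT} S_T = \frac{t}{T} S_T$ and the conditional variance $\mathrm{Var}(S_t \mid S_T) = qt - \frac{(qt)^2}{qT} = q\frac{t(T-t)}{T}$. Combining these via $\E_T[S_t^2] = (\E_T[S_t])^2 + \mathrm{Var}_T(S_t)$ yields
\[
\E_T[(X_t - Y_t)^2] = \frac{t^2}{T^2}(x_T - y_T)^2 + q\frac{t(T-t)}{T},
\]
which is exactly the claim.

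The routine Gaussian algebra is not the source of difficulty; the one point deserving care is the second step, where one must argue that among all signals known at $T$ only the aggregate difference $x_T - y_T$ carries information about the partial sum $S_t$, because the individual pre-$T$ steps are never observed while the post-$T$ steps are independent of it. Once that reduction is justified, the conclusion is immediate from the Brownian-bridge-style conditioning of a random walk on its observed endpoint.
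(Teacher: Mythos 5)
Your proof is correct and follows essentially the same route as the paper: both reduce the claim to the partial sum $\sum_{i \le t} A_i$, use independence of $\sum_{i\le t} A_i$ and $\sum_{i=t+1}^{T} A_i$ to obtain the conditional law $N\bigl(\tfrac{t}{T}(x_T - y_T),\, q\tfrac{t(T-t)}{T}\bigr)$, and conclude via second moment $=$ mean$^2$ $+$ variance. The only (cosmetic) difference is that the paper phrases the conditioning as Bayesian fusion of two independent noisy observations of $X_t - Y_t$ with precision weighting, whereas you phrase it as conditioning the random walk on its endpoint via the bivariate Gaussian regression formula; you are in fact slightly more careful than the paper in justifying why all information at $T$ beyond $x_T - y_T$ is irrelevant.
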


\begin{proof}
According to the model
\s\begin{align*}
0 &= (x_t - y_t) - \sum\limits_{i=1}^t A_i \\
x_T - y_T &= (x_t - y_t) + \sum\limits_{i=t+1}^T A_i
\end{align*}\n

Since $- \sum\limits_{i=1}^t A_i$ and $\sum\limits_{i=t+1}^T A_i$ are mutually independent, we have two independent observations ($0$ and $x_T - y_T$) of the random normal variable $X_t - Y_t$, with variances $qt$ and $q(T-t)$, respectively, which therefore has posterior distribution $N(\mu,\sigma^2)$ with
\s\begin{align*}
\mu &= \frac{\frac{0}{qt} + \frac{x_T - y_T}{q(T-t)}}{\frac{1}{qt} + \frac{1}{q(T-t)}} = \frac{t}{T}(x_T - y_T) \\
\sigma^2 &= \frac{1}{\frac{1}{qt} + \frac{1}{q(T-t)}} = q\frac{t(T-t)}{T}
\end{align*}\n

As the second moment of $N(\mu,\sigma^2)$ is $\mu^2 + \sigma^2$, the lemma follows.
\footnote{An alternative line of proof would be to show that $\frac{x_t - y_t}{t}$ is a martingale. We use this alternative method in the other case.}
\qed
\end{proof}

To complete the proof, it is sufficient to show that speaking first at $t_2$ does not carry a higher expected reward than speaking first at $t_1$, with reward expectations taken at $t_1$.

By Propositions \ref{first} and \ref{consecutive} and the induction hypothesis, the optimal reward expectation of an expert whose first prediction is at $t$, which we denote by $W_{\leq t}$, is
\s\begin{align}
\E_t[W_{\leq t}] &= \frac{1}{2}\Bigl[\frac{(x_t - y_t)^2}t - q + \log(1-q) - \sum_{t_i < t} \log(1 - q\frac{t_{i-1} - t_i}{t_{i-1}})\Bigr]
\label{expect-after-t}
\end{align}\n

By \eqref{expect-after-t} and Lemma \ref{at-t}, the expert's expectation of $W_{\leq t_2}$ at $t_1$ is
\s\begin{align}
\label{expect-after-t-at-T}
\E_{t_1}[W_{\leq t_2} | x_{t_1}, y_{t_1}] &= \frac{1}{2}\Bigl[\frac{t_2}{t_1^2}(x_{t_1} - y_{t_1})^2 + q\frac{t_1-t_2}{t_1} \nonumber \\ &
- q + \log(1-q) - \sum_{t_i < t_2} \log(1 - q\frac{t_{i-1} - t_i}{t_{i-1}})\Bigr]
\end{align}\n

Now we show that $\E\limits_{t_1}[W_{\leq t_1}] \geq \E\limits_{t_1}[W_{\leq t_2} | x_{t_1}, y_{t_1}]$. By \eqref{expect-after-t-at-T} and substituting $t_1$ for $t$ in \eqref{expect-after-t}
\s\begin{align*}
\E_{t_1}[W_{\leq t_1}] &- \E_{t_1}[W_{\leq t_2} | x_{t_1}, y_{t_1}] \\ 
&= \frac{1}{2}\Bigl[\frac{t_1 - t_2}{t_1^2}(x_{t_1} - y_{t_1})^2 - q\frac{t_1 - t_2}{t_1} - \log(1 - q\frac{t_1 - t_2}{t_1})\Bigr]  \\
&>= \frac{1}{2}\Bigl[ - q\frac{t_1 - t_2}{t_1} - \log(1 - q\frac{t_1 - t_2}{t_1})\Bigr] \\
&\geq 0
\end{align*}\n

Because for every $x < 1$, $\log (1-x) \leq -x$.

\item The expert has already made a prediction.

Suppose the expert's last prediction was at $T$. If his next prediction is at $t$, his reward expectation (at $t$) for all subsequent predictions is, from \eqref{E_W}, Proposition \ref{consecutive} and the induction hypothesis
\s\begin{align}
\E_t[W_{\leq t}] &= \frac{1}{2}\Bigl[\frac{(x^*_t - y_t)^2}{\sigma_{t-}^2} + \Bigl(\frac{\sigma_{t+}^2}{\sigma_{t-}^2} - 1 - \log \frac{\sigma_{t+}^2}{\sigma_{t-}^2}\Bigr)\nonumber\\ 
& - \sum_{t_i < t} \log(1 - q\frac{t_{i-1} - t_i}{t_{i-1}})\Bigr]
\label{W_t}
\end{align}\n

\noindent where, by Propositions \ref{at-prediction} and \ref{after-prediction},
\s\begin{align}
\label{x*}
x^*_t &:= \frac{\frac{x_t}{t} + \frac{1}{1-q}\frac{y_T}{T} - \frac{x_T}{T}}{\frac{1}{t} + \frac{q}{1-q}\frac{1}{T}}\\
\label{sigma}
\sigma_t^2 &:= \sigma_{t-}^2 :=\frac{1}{\frac{1}{t} + \frac{q}{1-q}\frac{1}{T}} \\
\sigma_{t+}^2 &:= (1 - q) t
\end{align}\n

So that

\s\begin{align}
\sigma_{t+}^2 /  \sigma_{t-}^2 &= 1 - q\frac{T-t}{T}
\end{align}\n

We wish to show that, for every $t_1 > t_2$, the expert, given his signals at $t_1$, prefers making a prediction immediately (at $t_1$), rather than later (at $t_2$), i.e., that \s$$\E\limits_{t_1}[W_{\leq t_1}] \geq \E\limits_{t_1}[W_{\leq t_2} | x_{t_1}, y_{t_1}]$$\n

To calculate $\E\limits_{t_1}[W_{\leq t_2} | x_{t_1}, y_{t_1}]$, the later expectation given present signals, we must derive a distribution for $(x^*_{t_2} - y_{t_2}) | x_{t_1}, y_{t_1}$. The following lemma answers this.

\begin{lemma}
\label{Z_t_2}
For every $\tau < t$ the distribution of $Z_\tau := (x^*_\tau - y_\tau) | x_t, y_t, x_T, y_T$ is normal with
\s\begin{align}
\label{mu_Z}
\mu(Z_\tau) &= \frac{\sigma_\tau^2}{\sigma_t^2}(x^*_t - y_t) \\
\label{var_Z}
Var(Z_\tau) &= q(t - \tau)\Bigl[\frac{1}{t \tau} + \frac{q}{(1-q)T^2}\Bigr] \sigma_\tau^4
\end{align}\n
\end{lemma}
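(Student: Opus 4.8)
The plan is to use joint Gaussianity throughout. By Lemma~\ref{joint} every linear combination of the model signals is normal, so $Z_\tau$ — a conditional distribution of one linear combination on four others — is itself normal, and it suffices to compute its conditional mean and variance. The first step is to re-express $x^*_\tau - y_\tau$ in terms of the independent building blocks of the model. Writing $a_s := \sum_{i=1}^s A_i = x_s - y_s$ and $b_s := \sum_{i=1}^s B_i$, I substitute $x_\tau = X_0 + a_\tau + b_\tau$, $x_T = X_0 + a_T + b_T$ and $y_T = X_0 + b_T$ into \eqref{x*} and \eqref{sigma}. The coefficient of $X_0$ in $x^*_\tau$ equals $\sigma_\tau^2\bigl(\frac1\tau + \frac{q}{(1-q)T}\bigr) = 1$, so subtracting $y_\tau = X_0 + b_\tau$ cancels the outcome level $X_0$. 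The decisive algebraic fact is that the surviving coefficient of $b_\tau$, namely $\frac{\sigma_\tau^2}{\tau}-1$, equals $-\frac{q\sigma_\tau^2}{(1-q)T}$, which is exactly minus the coefficient $\gamma := \frac{q\sigma_\tau^2}{(1-q)T}$ of $b_T$. Hence the $B$-dependence collapses to a single increment:
\begin{align*}
x^*_\tau - y_\tau = \sigma_\tau^2\Bigl(\frac{a_\tau}{\tau} - \frac{a_T}{T}\Bigr) + \gamma\sum_{i=\tau+1}^T B_i .
\end{align*}

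Since the $A_i$ and $B_i$ are mutually independent, both conditional moments split into an $A$-part and a $B$-part. For the $A$-part, conditioning on $x_t,y_t,x_T,y_T$ fixes $a_t$ and $a_T$; the posterior of $a_\tau$ given $a_t$ is precisely the setting of Lemma~\ref{at-t} with $(T,t)$ replaced by $(t,\tau)$, giving $\E_t[a_\tau] = \frac{\tau}{t}a_t$ and $Var(a_\tau\mid a_t) = q\frac{\tau(t-\tau)}{t}$ (and $a_T$ is irrelevant, since $a_T - a_t$ uses only indices $>t$, disjoint from $a_\tau$). For the $B$-part I split $\sum_{i=\tau+1}^T B_i = \sum_{i=\tau+1}^t B_i + \sum_{i=t+1}^T B_i$, where the second sum equals the known constant $y_T - y_t$ and the first is independent of it.

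For the mean, these facts give $\E_t[x^*_\tau - y_\tau] = \sigma_\tau^2\bigl(\frac{a_t}{t}-\frac{a_T}{T}\bigr) + \gamma(y_T-y_t)$, using $\E_t\bigl[\sum_{i=\tau+1}^t B_i\bigr]=0$. Expanding the observed quantity $x^*_t - y_t$ in the same way and using $\frac{\sigma_\tau^2}{\sigma_t^2}\cdot\frac{q\sigma_t^2}{(1-q)T}=\gamma$ shows this equals $\frac{\sigma_\tau^2}{\sigma_t^2}(x^*_t - y_t)$, which is \eqref{mu_Z}; equivalently, $\frac{x^*_s-y_s}{\sigma_s^2}$ is the reverse-time martingale alluded to in the footnote of Lemma~\ref{at-t}. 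For the variance, independence gives $Var(Z_\tau) = \frac{\sigma_\tau^4}{\tau^2}Var(a_\tau\mid a_t) + \gamma^2\,Var\bigl(\sum_{i=\tau+1}^t B_i\bigr)$; the first term is $\sigma_\tau^4\frac{q(t-\tau)}{\tau t}$ and the second is $\gamma^2(1-q)(t-\tau)=\sigma_\tau^4\frac{q^2(t-\tau)}{(1-q)T^2}$, whose sum is $q(t-\tau)\bigl[\frac1{t\tau}+\frac{q}{(1-q)T^2}\bigr]\sigma_\tau^4$, i.e.\ \eqref{var_Z}.

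The main obstacle is handling the $B$-increments under the near-improper prior $\sigma_0^2 \gg T_{max}$. I must justify that conditioning on the expert's own signals $y_t = X_0+b_t$ and $y_T = X_0+b_T$ leaves the law of the internal increment $\sum_{i=\tau+1}^t B_i$ unchanged: because $X_0$ carries essentially unbounded prior variance it absorbs the unknown level, so $y_t$ constrains the $B$-increments only through their total over $(t,T]$, which is pinned at $y_T-y_t$. Formally, $\E\bigl[\sum_{i=\tau+1}^t B_i \mid y_t\bigr] = \frac{(1-q)(t-\tau)}{\sigma_0^2+(1-q)t}\,y_t \to 0$ and $Var\bigl(\sum_{i=\tau+1}^t B_i \mid y_t\bigr)\to (1-q)(t-\tau)$ as $\sigma_0^2\to\infty$, while $y_T$ adds nothing because $y_T-y_t$ is independent of this increment. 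Once this limiting argument and the identity $\frac{\sigma_\tau^2}{\tau}-1=-\gamma$ are established, the rest is routine algebra.
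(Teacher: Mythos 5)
Your proof is correct and follows essentially the same route as the paper: your single-increment representation $x^*_\tau - y_\tau = \sigma_\tau^2\bigl(\frac{a_\tau}{\tau} - \frac{a_T}{T}\bigr) + \gamma\sum_{i=\tau+1}^T B_i$ is exactly the paper's identity \eqref{diffxy} (the martingale form of $\frac{x^*_s - y_s}{\sigma_s^2}$), and you then extract the conditional mean and variance from the independence of the $A$- and $B$-increments, just as the paper does (it phrases the mean via exchangeability of the $A_i$'s and the variance via the Law of Total Variance, while you use the Brownian-bridge posterior of $a_\tau$ given $a_t$ and a direct split of the $B$-sum, which amounts to the same computation). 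The one point where you are more careful than the paper is in explicitly justifying, via the $\sigma_0^2 \to \infty$ limit, that the level observations $y_t, y_T$ carry no information about the interior increment $\sum_{i=\tau+1}^t B_i$ --- a step the paper's proof asserts with ``clearly.''
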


\begin{proof}
We show that $\frac{x^*_t - y_t}{\sigma_t^2}$ is a martingale. From \eqref{x*} and \eqref{sigma}
\s\begin{align}
\nonumber
\frac{x^*_t - y_t}{\sigma_t^2} &= \frac{x_t - y_t}{t} + \frac{1}{1-q}\frac{y_T - y_t}{T} - \frac{x_T - y_t}{T} \\
\nonumber
&=  \frac{x_t - y_t}{t} + \frac{q}{1-q}\frac{y_T - y_t}{T} - \frac{x_T - y_T}{T} \\
\label{diffxy}
&=  \frac{1}{t} \sum\limits_{i=1}^t A_i + \frac{q}{(1-q)T}\sum\limits_{i= t + 1}^T B_i - \frac{x_T - y_T}{T}
\end{align}\n

Since the $A_i$'s are i.i.d., by symmetry for every $1 \leq i \leq t$, $\E[A_i | x_t, y_t, x_T, y_T] = \frac{x_t - y_t}{t}$. Clearly $\E[B_i | x_t, y_t, x_T, y_T] = 0$, since knowing $x_t, y_t, x_T, y_T$ provides no information on the distribution of the $B_i$'s for $i \leq t$. Conditioning  \eqref{diffxy} on $x_t, y_t, x_T, y_T$, we see that for every $1 \leq \tau \leq t$
\s\begin{align}
\label{nice}
\E_t\Bigl[\frac{Z_\tau}{\sigma_\tau^2}\Bigr] = \frac{Z_t}{\sigma_t^2}
\end{align}\n

\noindent from which \eqref{mu_Z} follows.

From \eqref{diffxy} we have
\s\begin{align}
\label{diff}
\frac{x^*_\tau - y_\tau}{\sigma_\tau^2} - \frac{x^*_t - y_t}{\sigma_t^2} &=  \Bigl(\frac{1}{\tau} - \frac{1}{t}\Bigr) \sum\limits_{i=1}^\tau A_i - \frac{1}{t} \sum\limits_{i=\tau + 1}^t A_i + \frac{q}{(1-q)T}\sum\limits_{i= \tau + 1}^t B_i
\end{align}\n

By the Law of Total Variance
\s\begin{align*}
Var\Bigl(\frac{Z_\tau}{\sigma_\tau^2}\Bigr) & = Var\Bigl(\frac{x^*_\tau - y_\tau}{\sigma_\tau^2}\Bigr) - Var\Bigl( \frac{x^*_t - y_t}{\sigma_t^2}\Bigr) \\
&= Var\Bigl(\frac{x^*_\tau - y_\tau}{\sigma_\tau^2} - \frac{x^*_t - y_t}{\sigma_t^2}\Bigr) 
\end{align*}\n

And so, since the terms of the right-hand side of \eqref{diff} are mutually independent, we have
\s\begin{align*}
Var\Bigl(\frac{Z_\tau}{\sigma_\tau^2}\Bigr) & = Var\Bigl(\frac{x^*_\tau - y_\tau}{\sigma_\tau^2} - \frac{x^*_t - y_t}{\sigma_t^2}\Bigr) \\ 
&= \Bigl(\frac{1}{\tau} - \frac{1}{t}\Bigr)^2 q \tau + \frac{1}{t^2} q (t - \tau) + \frac{q^2}{(1-q)^2 T^2} (1-q) (t - \tau) \\
&= q(t - \tau)\Bigl[\frac{1}{t \tau} + \frac{q}{(1-q)T^2}\Bigr]
\end{align*}\n

\noindent from which \eqref{var_Z} follows.
\qed
\end{proof}

Using Lemma \ref{Z_t_2} and \eqref{W_t}, let us evaluate $\E\limits_{t_1}[W_{\leq t_1}] - \E\limits_{t_1}[W_{\leq t_2} | x_{t_1}, y_{t_1}]$, the difference between speaking at $t_1$ and speaking at $t_2$.
\s\begin{align*}
2 \bigl(\E_{t_1}&[W_{\leq t_1}] - \E_{t_1}[W_{\leq t_2} | x_{t_1}, y_{t_1}]\Bigr) \\
&= \frac{1}{\sigma_{t_1}^2} \Bigl(1 - \frac{\sigma_{t_2}^2}{\sigma_{t_1}^2} \Bigr) (x^*_{t_1} - y_{t_1})^2  - q(t_1 - t_2)\Bigl[\frac{1}{t_1 t_2} + \frac{q}{(1-q)T^2}\bigr] \sigma_{t_2}^2 \\
& + \bigl(1- q\frac{T-t_1}{T}\bigr) - 1 - \log\bigl(1 - q\frac{T-t_1}{T}\bigr) \\
& - \bigl(1 - q\frac{T-t_2}{T}\bigr) + 1 + \log\bigl(1 - q\frac{T-t_2}{T}\bigr) \\
& - \log\bigl(1 - q\frac{t_1 - t_2}{t_1}\bigr) \\
&= \frac{1}{\sigma_{t_1}^2} \Bigl(1 - \frac{\sigma_{t_2}^2}{\sigma_{t_1}^2} \Bigr) (x^*_{t_1} - y_{t_1})^2  \\
&+ q\frac{t_1 - t_2}{T} - q(t_1 - t_2)\Bigl[\frac{1}{t_1 t_2} + \frac{q}{(1-q)T^2}\Bigr] \sigma_{t_2}^2 \\
& - \log\frac{\bigl(1 - q\frac{T-t_1}{T}\bigr) \bigl(1 - q\frac{t_1 - t_2}{t_1}\bigr)}{1 - q\frac{T-t_2}{T}}
\end{align*}\n

Since $\sigma_t^2$ is increasing in $t$, we have $1 - \frac{\sigma_{t_2}^2}{\sigma_{t_1}^2} > 0$, and we get from the above the following inequality
\s\begin{align}
\label{ineq}
2 \bigl(\E_{t_1}&[W_{\leq t_1}] - \E_{t_1}[W_{\leq t_2} | x_{t_1}, y_{t_1}]\bigr) \nonumber \\
&> q\frac{t_1 - t_2}{T} - q(t_1 - t_2)\Bigl[\frac{1}{t_1 t_2} + \frac{q}{(1-q)T^2}\Bigr] \sigma_{t_2}^2 \nonumber \\
& - \log\frac{\bigl(1 - q\frac{T-t_1}{T}\bigr) \bigl(1 - q\frac{t_1 - t_2}{t_1}\bigr)}{1 - q\frac{T-t_2}{T}}
\end{align}\n

Define
\s\begin{align}
\label{R}
R &:= -q\frac{t_1 - t_2}{T} + q(t_1 - t_2)\Bigl[\frac{1}{t_1 t_2} + \frac{q}{(1-q)T^2}\Bigr] \sigma_{t_2}^2 \\
\label{S}
S &:= 1 - \frac{\bigl(1 - q\frac{T-t_1}{T}\bigr) \bigl(1 - q\frac{t_1 - t_2}{t_1}\bigr)}{1 - q\frac{T-t_2}{T}}
\end{align}\n

\noindent so that \eqref{ineq} can be rewritten
\s\begin{align*}
2 \bigl(\E_{t_1}[W_{\leq t_1}] - \E_{t_1}[W_{\leq t_2} | x_{t_1}, y_{t_1}]\bigr) &> -R - \log(1 - S)
\end{align*}\n

Simplifying and rearranging \eqref{R} and \eqref{S}, one discovers that $R = S$. Therefore, as for every $x < 1$, $\log (1-x) \leq -x$, we conclude
\s\begin{align*}
\E_{t_1}[W_{\leq t_1}] - \E_{t_1}[W_{\leq t_2} | x_{t_1}, y_{t_1}] > 0
\end{align*}\n

\noindent which settles this case.
\end{enumerate}
\qed
\end{proof}

Consequently if the expert is allowed to speak every period, he will. The following proposition gives his reward expectation and its asymptotic behavior for large $T$.

\begin{theorem}
\label{average}
Assume the expert is allowed to make a prediction every period. Mark by $\Xi(T)$ the average reward expectation at period $T$ for an expert using optimal prediction strategy.
\s\begin{align}
\label{sum_logs}
\Xi(T) &= \frac{1}{2}\sum\limits_{t=1}^T \log \frac{t}{t-q} \\
\label{gamma_formula}
&= \frac{1}{2}\log \frac{\Gamma(1 - q)\Gamma(T + 1)}{\Gamma(T + 1 - q)}
\end{align}\n

For large $T$, $\Xi(T) = O(\log T)$. More specifically
\s\begin{align}
\label{limit}
\lim\limits_{T \to \infty} \frac{\Xi(T)}{q \log T + \log\Gamma(1-q)} = \frac{1}{2}
\end{align}\n
\end{theorem}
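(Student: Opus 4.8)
The plan is to compute $\Xi(T)$ as the \emph{a priori} expectation of the sum of per-prediction rewards, thereby establishing \eqref{sum_logs}; then rewrite this sum in closed form via the Gamma function to obtain \eqref{gamma_formula}; and finally extract the asymptotic limit \eqref{limit} from Stirling's formula.

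First I would sum the expected rewards of the individual predictions using linearity of expectation. By Propositions \ref{at-prediction} and \ref{after-prediction}, the market's prior distribution at any period $t$ depends only on the most recent prediction, so when the expert speaks at every period the pair $(t+1,t)$ is a consecutive pair in the sense of Proposition \ref{consecutive}. Hence the expected reward of the prediction at $t$ (for $t<T$), whose predecessor is at $t+1$, is $-\tfrac{1}{2}\log\!\bigl(1-q\tfrac{(t+1)-t}{t+1}\bigr)=\tfrac{1}{2}\log\tfrac{t+1}{t+1-q}$, independent of observed signals. For the \emph{first} prediction, at $T$, Proposition \ref{first} gives $\frac{(x_T-y_T)^2}{2T}-\frac{1}{2}\bigl(q+\log(1-q)\bigr)$; taking its \emph{a priori} expectation and using $\E[(X_T-Y_T)^2]=qT$, which follows from \eqref{covxx} and \eqref{covxy} since $\mathrm{Var}(X_T-Y_T)=T-2(1-q)T+(1-q)T=qT$, yields $\tfrac{1}{2}\log\tfrac{1}{1-q}$. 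Identifying the first-prediction term with the $t=1$ summand $\tfrac{1}{2}\log\tfrac{1}{1-q}$ and the reindexed consecutive-prediction terms with the $t=2,\dots,T$ summands gives exactly \eqref{sum_logs}.

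Next I would convert the sum to the Gamma-function form. Splitting $\frac{1}{2}\sum_{t=1}^T\log\frac{t}{t-q}$ into $\frac{1}{2}\sum_{t=1}^T\log t-\frac{1}{2}\sum_{t=1}^T\log(t-q)$, the first sum equals $\frac{1}{2}\log\Gamma(T+1)$, while repeated application of $\Gamma(z+1)=z\Gamma(z)$ telescopes $\prod_{t=1}^T(t-q)$ into $\Gamma(T+1-q)/\Gamma(1-q)$, so the second sum equals $\frac{1}{2}\log\frac{\Gamma(T+1-q)}{\Gamma(1-q)}$; subtracting gives \eqref{gamma_formula}.

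Finally, the asymptotics follow from the standard ratio estimate $\log\Gamma(T+1)-\log\Gamma(T+1-q)=q\log T+O(1/T)$ (equivalently $\Gamma(z+a)/\Gamma(z)\sim z^{a}$, or Stirling's formula). Substituting into \eqref{gamma_formula} gives $\Xi(T)=\frac{1}{2}\bigl(q\log T+\log\Gamma(1-q)+O(1/T)\bigr)$, which is $O(\log T)$ and immediately yields \eqref{limit}. I expect the only delicate step to be the bookkeeping in the second paragraph: justifying that each prediction's expected reward depends only on its immediate predecessor (so that Proposition \ref{consecutive} applies term by term) and correctly handling the single signal-dependent term, the first prediction's $(x_T-y_T)^2$, by taking its \emph{a priori} expectation. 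Once the sum \eqref{sum_logs} is in hand, the Gamma identity and the Stirling asymptotics are routine.
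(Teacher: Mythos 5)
Your proposal is correct and follows essentially the same route as the paper's proof: decompose the total reward into the first-prediction term $\tfrac{1}{2}\log\tfrac{1}{1-q}$ plus the consecutive-prediction terms $\tfrac{1}{2}\log\tfrac{t+1}{t+1-q}$ from Proposition \ref{consecutive}, sum and reindex to get \eqref{sum_logs}, telescope into the Gamma form \eqref{gamma_formula}, and obtain \eqref{limit} from the ratio asymptotic $\Gamma(n+\alpha)/(\Gamma(n)\,n^{\alpha})\to 1$. The only cosmetic difference is that you compute the \emph{a priori} first-prediction expectation directly from Proposition \ref{first} together with $\E[(X_T-Y_T)^2]=qT$, whereas the paper cites the \emph{a priori} version of Lemma \ref{expect-before}; both give the same value.
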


\begin{proof}
The optimal policy is to predict at every period starting at $T$. The expected reward (averaged over all random walks) for a first prediction at $T$ is, by Lemma \ref{expect-before}, $\frac{1}{2}\log \frac{1}{1-q}$, while for every $t < T$, it is, by Proposition \ref{consecutive}, $\frac{1}{2}\log \frac{t+1}{t+1-q}$. \eqref{sum_logs} follows, and from it \eqref{gamma_formula}.

We use the following limit of the Gamma function: For $\alpha \in \mathbb{R}$
\s\begin{align*}
\lim_{n \to \infty} \frac{\Gamma(n+\alpha)}{\Gamma(n) n^\alpha} = 1
\end{align*}\n

\eqref{limit} follows from this and \eqref{gamma_formula} by substituting $\alpha = -q, n = T + 1$.
\qed
\end{proof}

\section{Discussion}
\label{discussion}

\subsection{Conclusions}

We analyzed the expert's policy in the prediction scenario described, and found that an expert should make a new, truthful prediction whenever he is allowed to and has an updated signal. For large $t$, his total reward is on average roughly $\frac{1}{2} q \log t$. This compares to the asymptotic average reward of $q \log \log t$ that \cite{azar2016should} found is achievable by best policy of the expert when restricted to a single prediction.

The ability to revise predictions therefore significantly increases the expert's reward, by a factor $O(\frac{\log t}{\log \log t})$.

The results also show the answer to a related question: Suppose the market wishes to hire an expert to publicly weigh in on the market price for the duration of the prediction period. The results show that compensating the expert with the logarithmic scoring rule is incentive enough to achieve this goal, and the expected expense is proportional to the expert's expected reward.

\subsection{Other Random Walks}

In Section \ref{results} we noted that our main results, and particularly the truthfulness property, does not necessarily apply to any other model. It is an interesting open problem to characterize which models do, in fact, lead to similar results.
It needs reminding that our derivations depend on two critical elements of our model:
\begin{enumerate}
\item Gaussian random walk
\item Variances of all signals and, as a result, of inferred distributions are common knowledge, and consequently independent of signal values.
\end{enumerate}

No similar results may apply, for example, in binary prediction markets (where there is a 0/1 outcome), because in the underlying Bernoulli distribution, a prediction $p$, representing the distribution's mean, also affects the distribution variance $p(1-p)$.

\subsection{Informational Substitutes}

\cite{chen2016informational} formulate a criterion of ``informational substitutes'' as leading to being truthful and revealing information at first opportunity, in a world where agents' private information is static. The definition used for ``informational substitutes'' is that information is more valuable (per the scoring rule in force) earlier than later. While simply stated, working this out for any given case may be involved.

We find the same, in our world where private information is dynamic. In that context, the major part of the proof Theorem \ref{predict-always} was to show that expert and market's signals are informational substitutes.

Our result is therefore consistent with a generalization of \cite{chen2016informational} (and \cite{Chen2007}) to dynamic-information contexts. We venture a guess that such a generalization will prove to be valid. As our analysis shows, such a generalization is not straight-forward, but depends, {\em inter alia}, on the behavior of interdependent martingales.

\subsection{Several Experts}

Our results hold for a single expert. With more than one expert, neither truthfulness nor promptness in revealing information is guaranteed, as we already know from static-information studies. It would be useful to have a criterion to separate the truthful scenarios from the untruthful ones. As said earlier, we believe that a generalization of \cite{chen2016informational}'s ``informational substitutes'' to dynamic-information structures will prove to be such a criterion, even though working out that criterion for any given setting might remain a non-trivial undertaking.

Where truthfulness and promptness are found to be optimal, the remaining question would be the expected reward, the equivalent of our Theorem \ref{average}. Tie-breaking rules would be needed for experts predicting simultaneously. The logical generalization of Theorem \ref{average} is that each expert's expected reward will be asymptotically $Q \log t$, where $Q$ is a measure of the expert's marginal information per unit time ($Q = \frac{1}{2}q$ in the single-expert scenario).

\bibliographystyle{ACM-Reference-Format-Journals}
\bibliography{infomarkets}

\end{document}